\documentclass[lettersize,journal]{IEEEtran}

\usepackage{amsmath,amssymb,amsfonts}
\usepackage{graphicx}
\usepackage{subfig}
\usepackage{textcomp}
\usepackage{xcolor}
\usepackage{cite}
\usepackage{hyperref}
\usepackage{amsthm}

\newtheorem{theorem}{Theorem}[section]
\newtheorem{proposition}[theorem]{Proposition}
\newtheorem{lemma}[theorem]{Lemma}
\newtheorem{corollary}[theorem]{Corollary}
\newtheorem{definition}[theorem]{Definition}

\begin{document}

\title{Spectral Graph Analysis for Predicting QoE Fairness Sensitivity in Wireless Communication Networks}

% \author{
%     % Author names and affiliations go here
%     Xinke Jian,
%     Zhiyuan Ren,~\IEEEmembership{Member,~IEEE,}
%     Wenchi Cheng,~\IEEEmembership{Senior Member,~IEEE,}% <-this % stops a space
% \thanks{Zhiyuan Ren, Xinke Jian and Wenchi Cheng are with the School of Telecommunications Engineering, Xidian University, Xi'an 710071, China.}% <- 2. 第一个单位
% \thanks{Corresponding author: Zhiyuan Ren (zyren@xidian.edu.cn).}%
% }

\author{
    \IEEEauthorblockN{
    Xinke Jian\IEEEauthorrefmark{1}, 
    Zhiyuan Ren\IEEEauthorrefmark{1}, 
    Wenchi Cheng\IEEEauthorrefmark{1}}
    
    \IEEEauthorblockA{\IEEEauthorrefmark{1}State Key Laboratory of Integrated Services Networks, Xidian University, Xi'an, China}
    
    \IEEEauthorblockA{E-mail:\{xkjian@stu.xidian.edu.cn, zyren@xidian.edu.cn, wccheng@xidian.edu.cn\}}

\thanks{This work was supported by the National Key Research and Development Program of China (No. 2024YFE0200302).}%
}
% \thanks{This work was supported by the National Key Research and Development Program of China (No. 2024YFE0200302).}%

% % The paper headers
% \markboth{IEEE/ACM TRANSACTIONS ON NETWORKING, VOL. XX, NO. X, MONTH 2025}%
% {Shell \MakeLowercase{\textit{et al.}}: A Spectral Bottleneck for Fairness}

\maketitle

\begin{abstract}
The evaluation of Quality of Experience (QoE) fairness depends not only on its current state but, more critically, on its sensitivity to changes in Service Level Agreement (SLA) parameters. 
However, the academic community has long lacked a predictive method connecting underlying topology to high-level service fairness. To bridge this gap, this paper analyzes a QoE imbalance index ($I$) through the lens of spectral graph theory.
Our core contribution is the proof of a novel exponential spectral upper bound. 
This bound reveals that the improvement of QoE fairness exhibits an exponential decay behavior only above a performance threshold determined jointly by network size and connectivity. 
Its core decay rate is dominated by the weaker of two factors: the SLA stringency ($a$) and the network's spectral gap ($c\lambda_2$). 
The upper bound unifies the service protocol and the topological bottleneck within a single performance bound formula for the first time.
This theoretical relationship also reveals a clear bottleneck effect, where the system's fairness ceiling is determined by the weaker link between service parameters and network structure. 
This finding provides a bottleneck-driven principle for resource optimization in network design and enables goal-driven reverse engineering. Extensive numerical experiments on various random graph models and real-world network topologies robustly validate the correctness and universality of our analytical framework.
\end{abstract}

\begin{IEEEkeywords}
Network fairness, quality of experience (QoE), spectral graph theory, algebraic connectivity, performance bounds, bottleneck effect.
\end{IEEEkeywords}

\section{Introduction}
\label{sec:introduction}

As Internet services diversify and become increasingly complex, Quality of Experience (QoE) is increasingly governed by stringent Service Level Agreement (SLA) parameters. 
Unlike single-service performance optimization, multiplex numerous services share critical resources such as transmission links or core nodes in practical networks, 
which inherently induces competition. Under tightened SLA constraints and limited resources, scheduling and resource allocation could amplify congestion and concentrate resources, 
leading to highly heterogeneous QoE outcomes and hence QoE fairness degradation, as shown in Fig. 1 \cite{17}. QoE fairness concerns whether QoE outcomes are relatively balanced among services/users under shared resource conditions. 
In particular, it focuses not only on the system-wide average QoE, but also on the dispersion of experiences and the guarantees of tail-end services.

% \begin{figure*}[!t]
% \centering
% \includegraphics[width=\textwidth]{lamda2 VS I_v2/SLA3.png}
% \caption{Network traffic distribution graph. Subgraph (a) shows the network traffic distribution under loose SLA parameters, while subgraph (b) shows the network traffic distribution under strict SLA parameters.}
% \label{fig:SLA}
% \end{figure*}

\begin{figure*}[!t]
  \centering
  \subfloat[Loose SLA ($h_0=5$ hops)]{
    \includegraphics[width=0.48\textwidth]{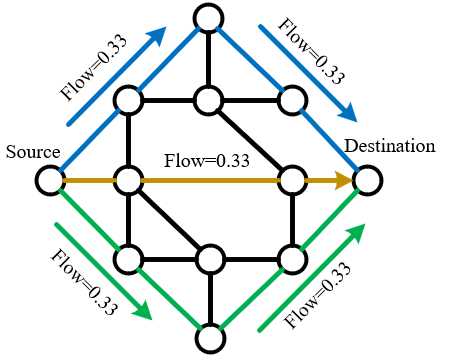}
    \label{fig:sla_loose}
  }\hfill
  \subfloat[Tight SLA ($h_0=3$ hops)]{
    \includegraphics[width=0.48\textwidth]{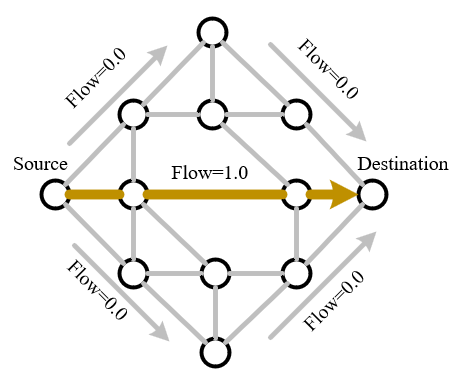}
    \label{fig:sla_tight}
  }
  \caption{Network traffic distribution graph. Subgraph (a) shows the traffic distribution under loose SLA parameters, while subgraph (b) shows the distribution under strict SLA parameters.}
  \label{fig:SLA}
\end{figure*}

Once QoE fairness becomes imbalanced in the network, it may lead to multifaceted consequences. 
On the user side, it manifests as unstable experience and reduced satisfaction. 
Tail-end services may cause service loss or failures of mission-critical tasks. 
On the network side, such imbalance is often associated with the coexistence of localized congestion and path polarization, which decrease resource utilization efficiency and weaken robustness under traffic surges or fault events. 
Therefore, studying QoE fairness in terms of measurement, mechanism analysis, and adaptive control is crucial for enhancing service predictability and SLA enforceability, and for supporting deterministic experience guarantees in next-generation networks.

Although substantial progress has been made in network resource allocation, traffic engineering, and adaptive optimization enabled by machine learning and neural networks, 
most existing studies primarily focus on improving network metrics under pre-specified objectives, which are largely defined at the QoS level. In contrast, a unified, interpretable, 
and predictive analytical framework describing the variation behavior and performance limits of QoE fairness under changing network conditions is still lacking.

To fill this theoretical void, this paper employs an entropy-based QoE imbalance index $I$, derived from information-theoretic axioms, to measure the concentration of end-to-end QoE \cite{16}. 
Our central question is not merely to evaluate a static fairness score but to uncover the intrinsic relationship governing the sensitivity and performance boundaries of $I$ as the network topology and service parameters $(a,h_0)$ vary. 
Based on the above, from the perspective of spectral graph theory, this paper establishes an exponential boundary relationship between the QoE imbalance index $I$ and the network's spectral gap $\lambda_2$ \cite{3}.
Meanwhile, this relationship exhibits a clear bottleneck effect: the system's overall rate of fairness improvement is determined by the weaker of the stringency of service parameters and the level of network connectivity. 
This quantitative relationship provides a theoretical cornerstone for network engineering to shift from ``passive response" to ``proactive design". 
It not only makes it possible to evaluate the worst-case performance upper bound for a fixed topology but also gives rise to a series of design principles, and allows us to infer specific topological requirements to meet target performance goals.

The main contributions of this paper are as follows:

\begin{enumerate}
    \item \textbf{Exponential Spectral Upper Bound}: We prove an exponential upper bound on the QoE imbalance index $I$, with a topology-dependent threshold $r^*$. The bound further yields a bottleneck characterization of the decay rate, governed by $\min\{a, c\lambda_2\}$, and bottleneck-driven design principle for achieving target performance.

    \item \textbf{Structural Intervention Principle}: For the normalized Laplacian, we provide a structural intervention criterion guided by its Fiedler vector (Theorem \ref{thm:fiedler_intervention}) that can monotonically improve the network's spectral gap $\lambda_2$, offering a concrete method to realize the bottleneck-driven design principle.

    \item \textbf{Data-Driven Certificate}: We propose a performance upper bound (Proposition \ref{prop:data_driven_cert}) that can be rapidly evaluated based on a single network-wide distance statistic, without requiring spectral computation.
\end{enumerate}

The remainder of this paper will elaborate on and experimentally validate these points. The structure is as follows: Section \ref{sec:related Work} analyzes the current research progress on QoE fairness; 
Section \ref{sec:framework} provides formal definitions; Section \ref{sec:main_results} presents the main theorem and key proof points; Section \ref{sec:engineering_implications} discusses its engineering implications; 
Section \ref{sec:experiments} conducts experimental validation on various networks; Section \ref{sec:conclusion} concludes the paper.

\section{Related Work}
\label{sec:related Work}

Existing studies on QoE fairness have made some progress in the relevant fields. Below, we review the literature from three representative perspectives: network resource allocation algorithms, robust traffic engineering, and machine-learning/reinforcement-learning-based detection and adaptation.

Firstly, in scenarios where multiple services compete for limited resources, resource allocation reshape the QoE distribution across services or users by altering quality of service (QoS) states such as throughput, delay, and bit error rate. 
Accordingly, in these studies, QoE fairness can be defined either on conventional QoS metrics, such as max-min fairness that prioritizes the worst service, or proportional fairness (PF) that balances efficiency and fairness \cite{4,5,6,7}, or on the distribution of user-perceived satisfaction through QoS-to-QoE mapping. 
Typical explicit modeling of QoE fairness appears in layered video multicast. For long term evolution (LTE) layered video multicast under heterogeneous channels \cite{21}, Han et al. constructed a joint decision framework and compared multiple allocation strategies, including throughput maximization, 
throughput-based PF, QoE maximization, and QoE-based proportional fairness, verifying that the QoE-based PF strategy can significantly improve user satisfaction with a slight average QoE loss. 
In video/multi-stream bandwidth allocation \cite{22}, Zhang introduced discounted scoring of user historical QoE and Jain's fairness to characterize multi-stream QoE fairness, and constructed a reward function by linearly combining average QoE, QoE changes, and fairness index. 
Then it learned distributed bandwidth weights that enhance the overall experience while promoting the balanced QoE distribution. Zhang et al. leveraged a graph neural network (GNN) to extract topology and trained user-level decision models in parallel, aiming to provide “fairness guarantees” for all users in an approximate Pareto-optimal sense \cite{23}. 
Wang et al. regarded QoE as the system utility and constructed a QoE-based maximization problem. It converted subcarrier allocation into weighted bipartite graph maximum matching and solved via the Kuhn-Munkres algorithm \cite{24}.

In addition, some studies have utilized traffic engineering(TE) to reallocate network resources via route selection and multi-path traffic splitting, so as to satisfy QoS requirements of services and further influence QoE \cite{8,9,10}. 
For reconfigurable data-center networks \cite{11}, Teh et al. proposed robust topology engineering approach that constructs an uncertain set from historical traffic and jointly optimizes topology and routing. It improved overall indicators such as maximum link utilization (MLU) under low-frequency reconfiguration conditions, 
and further correlated network indicators with application layer performance. This work mainly focused on enhancing network robustness and overall performance. For service-oriented software defined network (SDN) \cite{26}, M. Beshley et al. incorporated QoE levels into SLAs. 
The controller performed adaptive prioritization, multi-level routing and server selection to maintain user experience under different service requirements. Guo et al. proposed the SDN/OSPF traffic engineering (SOTE) framework in hybrid SDN environment, simultaneously optimizing the traffic-splliting ratios at SDN nodes and the open shortest path first (OSPF) weight settings. 
It modeled and solved this problem in a mixed-integer nonlinear programming form. The reported gains were primarily reflected by reduced MLU \cite{27}.

Overall, the above work provides classical fairness principles and actionable optimization goals for multi-service resource allocation. However, their focus mainly remained on how to allocate network resources under a given performance target. 
In most cases, the optimization is QoS-centric, with QoE and fairness often considered as higher-layer metrics and indirectly supported. As a result, direct constraints on the evolution of QoE fairness and exploration of the underlying mechanisms are still insufficient.

With the widespread application of machine learning and neural network models in network optimization and intelligent operations, a class of data-driven approaches has also emerged for QoE fairness. 
These methods employ deep models to learn the predictive relationship between QoE and network strategies, from high-dimensional network states, and then output resource allocation, scheduling or workload migration decisions to improve the distribution of user experiences\cite{12,13,14,15}. 
Representative works include deep reinforcement learning (DRL) with fairness-aware rewards\cite{28}, topology-aware GNN modeling with multi-objective training for multi-user fairness guarantees\cite{29}, and multi-agent reinforcement learning (RL) that encodes an explicit fairness-efficiency trade-off in the reward design\cite{30}. 
Despite their empirical effectiveness, the decision logic in these approaches is largely embedded in learned model parameters, rendering them essentially black-box predictors. Consequently, they usually lack interpretable mechanistic explanations and derivable structured conclusions, 
which limits to reveal the fundamental causes of QoE imbalance and to explain the sources of QoE fairness sensitivity to network changes.

\section{Theoretical Framework and Formal Definitions}
\label{sec:framework}

This section aims to establish a rigorous and clear formal framework for the subsequent theoretical analysis. 
We will define the core spectral properties describing the network's underlying topology and the QoE imbalance metric for measuring high-level service performance.

\subsection{Network Topology and Spectral Properties}
Let $G=(V,E)$ be a graph with $|V|=n$. In this paper, we are concerned with the set of unordered pairs of nodes, denoted by $K=\{\{u,v\}: u,v \in V, u \ne v\}$, with a total of $M = n(n-1)/2$ pairs. 
Logarithmic normalizations later in the paper are computed with respect to $\ln(M)$. Let the maximum and minimum degrees of the graph be $\Delta$ and $\delta$, respectively, and define the degree ratio as $\beta = \Delta/\delta$. 
Assumption A (Bounded Degree Ratio): We assume that $\beta$ is a bounded constant for the family of graphs under consideration.

To quantify the graph's topological structure, we employ standard tools from spectral graph theory. 
Let $W$ be the adjacency matrix of the graph, and $D$ be the diagonal degree matrix, where its diagonal elements $D_{ii}$ are the degrees of node $i$. 
We use the Normalized Laplacian operator $\mathcal{L}$ to characterize the graph's structure, defined as \cite{2}:
\begin{equation}
\mathcal{L} = I - D^{-1/2} W D^{-1/2}.
\end{equation}
where $I$ is the identity matrix. $\mathcal{L}$ is a positive semi-definite matrix, and its eigenvalues contain rich information about the graph's connectivity

The core topological metric in this paper is the second smallest eigenvalue $\lambda_2$ of $\mathcal{L}$, commonly known as the algebraic connectivity. 
$\lambda_2$ is a scalar that reflects the overall connectivity and expansion capability of the graph. A larger $\lambda_2$ value typically implies a better connected network structure without significant bottlenecks, 
where information or traffic can mix rapidly \cite{3} \cite{1}.

\subsection{Service-Layer QoE Imbalance}
In this paper, since we focus on the endogenous relationship between the underlying network topology and the service QoE performance,
we abstract the model as an idealized but practical motivated regime where path performance is dominated by the topology and links are approximately homogeneous.
In this regime, we model path cost by the hop count $h(u,v)$ of the shortest path for any pair of nodes $\{u,v\}$ and map this path-based cost to a QoE satisfaction.

To ground this mapping in established telecommunications engineering practices, 
we adopt the standardized method from the international telecommunication union telecommunication standardization sector (ITU-T) recommendations (such as G.1030), 
which uses a logistic curve (sigmoid function) to model the non-linear relationship between technical parameters and user satisfaction \cite{19}. 
We thus define the satisfaction weight $w_{uv}$ as follows:
\begin{equation}
w_{uv} = \frac{1}{1+\exp\!\big(a\,[h(u,v)-h_0]\big)},
\end{equation}
Here, the two SLA parameters have clear physical interpretations: $a>0$ is the evaluation stringency, which determines the steepness of the satisfaction curve around the threshold; 
$h_0>0$ is the performance threshold, representing the acceptable performance gate for the service.

The satisfaction weights of all node pairs form a distribution. By normalizing these weights, we implicitly assign equal a prior importance to each potential communicating pair.
So we obtain a pairwise probability distribution $\mathbf{p}$ that describes the satisfaction distribution across the entire network, where each element is:
\begin{equation}
p_{uv} = \frac{w_{uv}}{\sum_{\{x,y\} \in K} w_{xy}},
\end{equation}
Finally, we use the QoE imbalance index $I$ to quantify the deviation of this probability distribution $\mathbf{p}$ from the ideal state of fairness (i.e., the uniform distribution $\mathbf{u}$, 
where $u_{uv} = 1/M$). It is defined as the normalized Kullback-Leibler (KL) divergence \cite{20}:
\begin{equation}
I(a,h_0;G) := \frac{D_{KL}(\mathbf{p} || \mathbf{u})}{\ln(M)}.
\end{equation}
where $D_{KL}(\mathbf{p} || \mathbf{u}) = \sum_{\{u,v\} \in K} p_{uv} \ln(p_{uv}/u_{uv})$. This definition is equivalent to $1 - H(\mathbf{p})/\ln(M)$, where $H(\mathbf{p})$ is the Shannon entropy. We use the KL divergence form as it is more fundamental in information-theoretic proofs. The index ranges from $[0, 1]$, where $I=0$ corresponds to perfect fairness and $I=1$ to extreme unfairness.

To sum up, the QoE imbalance index $I$ interprets as a structural QoE fairness baseline characterizing how topology and SLA parameters jointly shape the distribution of satisfaction across the network,
rather than as a full model of all operator-specific details.

\section{Main Theoretical Results}
\label{sec:main_results}

This section presents the core theoretical contribution of this paper. 
Through a novel mathematical theorem, we connect the long-separated fields of network topology and high-level service fairness.

\subsection{The Exponential Spectral Upper Bound}
Our main theoretical result is the following Theorem \ref{thm:spectral_bound}, which provides a pre-conditioned, exponential worst-case guarantee for the service fairness of any given network topology, determined by its spectral properties.

\begin{theorem}[Spectral Upper Bound]
\label{thm:spectral_bound}
Under the assumption of a bounded degree ratio and using the normalized Laplacian, there exist constants $C, c > 0$ that depend only on the degree ratio. Furthermore, there exists a performance threshold $r_* \le C \frac{\ln n}{\lambda_2}$, determined jointly by the network size $n$ and the spectral gap $\lambda_2$.

When the performance threshold $h_0$ exceeds this gate (i.e., $h_0 \ge r_*$), let the effective threshold be $H = h_0 - r_*$. Then, the network's imbalance index $I$ satisfies the following upper bound:
\begin{equation}
I \;\le\; \frac{C\,(1+aH)}{\ln M}\,\exp\!\Big(-\min\{a,\,c\lambda_2\}\,H\Big).
\end{equation}
\end{theorem}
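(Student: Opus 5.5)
The plan is to control $D_{KL}(\mathbf{p}\|\mathbf{u})$ by how many pairs lie beyond distance $h_0$, and to control that number through the spectral gap. Write $w_{uv}=\sigma(a[h(u,v)-h_0])$ and $Z=\sum_{K}w_{xy}$. Then
\begin{equation*}
D_{KL}(\mathbf{p}\|\mathbf{u})=\ln M-H(\mathbf{p})=\sum_{K}p_{uv}\ln\frac{M w_{uv}}{Z}.
\end{equation*}
Since $w_{uv}\le 1$, each term is at most $p_{uv}\ln(M/Z)$, so $D_{KL}\le \ln(M/Z)$. If $\epsilon := 1 - Z/M$ is the average deficit $\frac{1}{M}\sum_K(1-w_{uv})$, this gives $D_{KL}\le -\ln(1-\epsilon)\le 2\epsilon$ once $\epsilon\le 1/2$. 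The whole problem therefore reduces to showing
\begin{equation*}
\epsilon\lesssim (1+aH)\,e^{-\min\{a,c\lambda_2\}H}.
\end{equation*}
Dividing by $\ln M$ then yields the stated bound. The case $\epsilon>1/2$ will be excluded by enlarging $C$, since $I\le 1$ always.

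The spectral input is a ball-growth or distance-tail estimate. Using the normalized Laplacian and the lazy random walk, the standard mixing bound gives
\begin{equation*}
\big|P^t(u,v)-\pi(v)\big|\le \sqrt{\pi(v)/\pi(u)}\,(1-\lambda_2/2)^t.
\end{equation*}
Under Assumption A, $\pi(v)/\pi(u)\le\beta$ and $\pi(v)\ge 1/(\beta n)$. Consequently, after $t_0=C_1\ln n/\lambda_2$ steps every $v$ is reachable from every $u$, so $\mathrm{diam}(G)\le t_0$; we set $r_*:=t_0$. This alone would make $\epsilon$ tiny but not exponentially graded. To obtain the $e^{-c\lambda_2 H}$ tail, I would instead bound the fraction of pairs at distance greater than $r_*+s$. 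One route uses the Chung-type expansion of balls: for any set $S$ with $\mathrm{vol}(S)\le \mathrm{vol}(V)/2$, Cheeger's inequality gives $\mathrm{vol}(N(S))\ge(1+c'\lambda_2)\,\mathrm{vol}(S)$, with constants depending on $\beta$. Applying this to complements of balls should give
\begin{equation*}
\frac{1}{n}\,\#\{v: h(u,v)>r_*+s\}\le C_2\,e^{-c\lambda_2 s}.
\end{equation*}
Here $r_*$ is the radius at which balls first reach half the volume, which is $O(\ln n/\lambda_2)$. Averaging over $u$ then gives the same bound for the pair-distance tail $q(s)$.

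With the tail in hand, I split $\epsilon=\frac1M\sum_K(1-w_{uv})$ according to $h-h_0$. Pairs with $h\le h_0$ contribute at most
\begin{equation*}
\frac1M\sum_{h(u,v)\le h_0}e^{-a(h_0-h)}\le \sum_{k\ge0}e^{-ak}\,\mathbb{P}\big(h=h_0-k\big).
\end{equation*}
Terms with $k\ge H$ are bounded by $e^{-aH}$ in total. Terms with $k<H$ correspond to distance $h>r_*$ and are controlled by $q(h_0-k-r_*)$. This produces a convolution sum
\begin{equation*}
\sum_{j=0}^{H}e^{-a(H-j)}\,C_2\,e^{-c\lambda_2 j},
\end{equation*}
which is at most $C_2(H+1)\,e^{-\min\{a,c\lambda_2\}H}$. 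This is exactly the origin of the $(1+aH)$-type prefactor and the min-rate, up to adjusting constants. The prefactor $H+1$ versus $1+aH$ needs care when $a$ is small. In that regime I would sum geometric series instead, bounding the sum by $\frac{1}{1-e^{-|a-c\lambda_2|}}$, or absorb via $e^{-\min\cdot H}$ trivially dominating. Pairs with $h>h_0$ contribute at most $q(H)\le C_2e^{-c\lambda_2H}$.

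The main obstacle is the second step: deriving the clean exponential distance tail $e^{-c\lambda_2 s}$ with $r_*=O(\ln n/\lambda_2)$ from the normalized spectral gap, with constants depending only on $\beta$. I would first try the volume-expansion argument via Cheeger's inequality, accepting a possible $\sqrt{\lambda_2}$ loss. If that loss appears, I would switch to the heat-kernel/Chebyshev-polynomial diameter argument of Chung, which yields expansion rate proportional to $\lambda_2$ in the exponent up to constants.
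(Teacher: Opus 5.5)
\textbf{The gap is in the convolution step.} Bounding each point mass $\mathbb{P}(h=h_0-k)$ by a distance tail throws away the factor $a$ carried by the increments of the logistic. What you actually get is of order $e^{-\min\{a,c\lambda_2\}H}\min\{H+1,\,1/|a-c\lambda_2|\}$, not $(1+aH)e^{-\min\{a,c\lambda_2\}H}$. Neither of your repairs closes this with $C$ depending only on $\beta$. Take $c\lambda_2=2a$, $H=K/a$ and $a=e^{-K}$: your sum is of order $C_2$, while the target $(1+aH)e^{-aH}=(1+K)e^{-K}$ tends to $0$ as $K\to\infty$. Such small $\lambda_2$ occur under Assumption A, e.g.\ on long cycles.

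The missing idea is summation by parts. Write $1-w(h)=\int_{-\infty}^{h}a\,s(r)(1-s(r))\,dr$ with $s(r)=1/(1+e^{a(r-h_0)})$. Then $\epsilon=\int\tau_r\,a\,s(r)(1-s(r))\,dr$. Inserting $s(1-s)\le e^{-a|r-h_0|}$ and the tail bound gives $e^{-aH}+aH\,e^{-\min\{a,c\lambda_2\}H}+e^{-c\lambda_2H}$ up to constants. This is exactly the paper's Theorem~\ref{thm:J_bound}; your $\epsilon$ is the paper's $J$.

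\textbf{A second, smaller gap.} ``$I\le 1$'' does not dispose of the case $\epsilon>1/2$, because the target carries a factor $1/\ln M$. What you need is that $D_{KL}\le\ln(M/Z)$ is bounded by a $\beta$-constant when $h_0\ge r_*$. This holds because $|B_{r_*}(u)|\ge\mathrm{vol}(V)/(2\Delta)\ge n/(2\beta)$ and $w\ge 1/2$ on those pairs, so $Z/M$ is bounded below by roughly $1/(4\beta)$.

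\textbf{What is sound, and a shortcut you already had.} Your reduction $D_{KL}\le\ln(M/Z)=-\ln(1-\epsilon)$ uses only $w_{uv}\le 1$. It replaces the paper's tomographic-mixture and $\chi^2$-convexity route (Proposition~\ref{prop:J_to_KL}, giving $J/(1-J)$), and it is both simpler and slightly sharper. Your worry about a $\sqrt{\lambda_2}$ loss is unfounded. The ball-growth argument only needs the easy direction $\Phi\ge\lambda_2/2$, which is how Lemma~\ref{lem:spectral_tail_bound} gets rate $c\lambda_2$ with $c$ of order $1/\beta$.

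Most importantly, the diameter route you set aside already proves the statement as written, which only asks for \emph{some} $r_*\le C\ln n/\lambda_2$. Take $r_*:=t_0\ge\mathrm{diam}(G)$. Then for $h_0\ge r_*$ every pair has $h(u,v)\le h_0$, so $1-w_{uv}\le\min\{1/2,\,e^{-aH}\}$. Hence $\epsilon\le 1/2$ and $D_{KL}\le 2e^{-aH}\le 2(1+aH)e^{-\min\{a,c\lambda_2\}H}$, and both problems above disappear. The $c\lambda_2$ branch of the minimum only has content when $r_*$ lies below the diameter, as with the paper's half-volume radius. The theorem as stated does not require that.
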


This theorem reveals several profound insights. 
First, the exponential improvement in fairness is not unconditional but is subject to an activation threshold $r_*$. 
Only when the service standard $h_0$ is high enough to cross this threshold, which is determined by the network's own structure, does the system enter a rapid ``fairness enhancement zone". 
Second, its exponential decay rate is governed by a bottleneck effect. We distill this core observation into the following proposition:

\begin{proposition}[Bottleneck Effect]
\label{prop:bottleneck_effect}
When $h_0 \ge r_*$, the exponential decay rate of the imbalance index $I$ is determined by the lesser of the service stringency $a$ and the scaled network spectral gap $c\lambda_2$. This means the potential for improving system fairness is limited by the bottleneck in either the service parameters or the network structure.
\end{proposition}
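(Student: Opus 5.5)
The Bottleneck Effect is essentially a corollary of Theorem~\ref{thm:spectral_bound}, so I plan to read the decay rate off the bound and then argue that neither factor can be dropped. Write $\kappa := \min\{a, c\lambda_2\}$ and $B(H) := \frac{C(1+aH)}{\ln M} e^{-\kappa H}$. The first step is to show that the polynomial prefactor does not change the exponential rate. Since $\ln(1+aH) = o(H)$ as $H\to\infty$, we get
\[
\limsup_{H\to\infty} \frac{1}{H}\ln B(H) = -\kappa .
\]
For fixed $G$ (so $n$, $\lambda_2$, $r_*$ are fixed) and $h_0 \ge r_*$, this gives an asymptotic upper bound: the guaranteed exponential rate of $I$ in the effective threshold $H = h_0 - r_*$ is exactly $\kappa$. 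A nonasymptotic version also holds: for any $\varepsilon>0$ we have $1+aH \le C_\varepsilon e^{\varepsilon H}$, so $I \le C' e^{-(\kappa-\varepsilon)H}$.

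The second step is the bottleneck interpretation, which is monotonicity in each argument of the minimum. The rate $\kappa$ is nondecreasing in both $a$ and $\lambda_2$. However, if $a < c\lambda_2$, then increasing $\lambda_2$ (improving the topology) leaves $\kappa = a$ unchanged. Symmetrically, if $c\lambda_2 < a$, then tightening the SLA stringency $a$ leaves $\kappa = c\lambda_2$ unchanged. So any improvement of the rate requires raising the smaller factor, and $\kappa \le \min$ of the two at all times. This is exactly the claim that the potential for improvement is limited by the weaker of the service parameter and the network structure.

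The main obstacle is the word "determined". Read strictly, it would require showing that the rate $\kappa$ is attained, i.e., a matching lower bound on $I$ exhibiting decay no faster than $e^{-\kappa H}$. I would state the proposition as a claim about the certified rate furnished by Theorem~\ref{thm:spectral_bound}. If tightness is wanted, I would first try two heuristic examples. For the $a$-limited regime, take a graph where a constant fraction of pairs sits at distance $h_0+1$; their weights scale like $e^{-a}$ relative to the rest, which keeps $I$ of order $e^{-aH}$. For the $\lambda_2$-limited regime, take a path- or barbell-like family, where distances of order $1/\lambda_2$ produce tails of order $e^{-c\lambda_2 H}$. I would present these as supporting evidence rather than a full proof.
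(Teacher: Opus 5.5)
Your argument is correct and is essentially the paper's own. The paper gives no separate proof; in Section~\ref{sec:engineering_implications} it takes logarithms in Theorem~\ref{thm:spectral_bound}, drops $\ln(1+aH)$ for large $H$ to read off the sensitivity $\min\{a,c\lambda_2\}$ of $\ln I_{\text{upper bound}}$, and then splits into the service-limited and structure-limited regimes, which your $\ln(1+aH)=o(H)$ and $\varepsilon$-slack statements make precise.

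One caution on your last paragraph: the certified-rate reading you adopt is forced, not just convenient. For any fixed connected graph whose pair distances are not all equal, a second-order expansion of $D_{KL}$ around $\mathbf u$ gives $I\sim\frac{\mathrm{Var}_{\mathbf u}(e^{ah})}{2\ln M}\,e^{-2ah_0}$ as $h_0\to\infty$, a true rate of $2a$ independent of $\lambda_2$, so your tightness heuristics would not work. On $P_n$ the tail $\tau_r=\frac{(n-r-1)(n-r)}{n(n-1)}$ is polynomial and vanishes at $r=n-1$, and a fixed fraction of pairs at distance $h_0+1$ keeps $D_{KL}(\mathbf p\|\mathbf u)$ bounded away from $0$ rather than producing $e^{-aH}$.
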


This quantitative relationship not only provides an upper bound for performance analysis but, more importantly, transforms Theorem \ref{thm:spectral_bound} from a passive analytical tool into a powerful, proactive engineering design guide.

\begin{corollary}[Convergence of Expander Graph Families]
\label{cor:expander_convergence}
If a family of graphs $\{G_n\}$ has a uniform lower bound on its spectral gap, $\lambda_2(G_n)\ge \mu_0>0$ (i.e., the network is sufficiently "robust"), then as long as the performance threshold $h_0$ grows at least logarithmically ($h_0 \ge C\ln n$), its imbalance index $I(a,h_0;G_n) \to 0$ as the network size $n\to\infty$.
\end{corollary}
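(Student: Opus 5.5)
The plan is to derive Corollary \ref{cor:expander_convergence} directly from Theorem \ref{thm:spectral_bound}, with the constant in the hypothesis $h_0 \ge C\ln n$ taken large enough relative to the theorem's constants. Write $C_1, c_1$ for the constants of Theorem \ref{thm:spectral_bound}. They depend only on the degree ratio $\beta$, which is uniformly bounded along the family by Assumption A. The first step is to control the activation threshold. Using $\lambda_2(G_n)\ge\mu_0$, we get $r_* \le C_1 \ln n/\lambda_2 \le (C_1/\mu_0)\ln n$. Choose the corollary's constant $C = 2C_1/\mu_0$. Then $h_0 \ge C\ln n$ ensures both that $h_0\ge r_*$, so the theorem applies, and that the effective threshold satisfies
\[
H = h_0 - r_* \ge h_0 - \tfrac{1}{2}h_0 = \tfrac{1}{2}h_0 \ge \tfrac{C_1}{\mu_0}\ln n .
\]

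The second step is to bound the decay rate from below uniformly in $n$. Set $\kappa := \min\{a, c_1\mu_0\}>0$. Since $a$ is fixed, $\min\{a, c_1\lambda_2(G_n)\}\ge\kappa$ for every $n$. Theorem \ref{thm:spectral_bound} then gives
\[
I(a,h_0;G_n) \le \frac{C_1(1+aH)}{\ln M}\, e^{-\kappa H}.
\]
The function $x\mapsto(1+ax)e^{-\kappa x}$ is bounded on $[0,\infty)$ by a constant $B(a,\kappa)$, so $I \le C_1 B/\ln M$. Because $M=n(n-1)/2$, we have $\ln M\to\infty$, and hence $I\to 0$. The growth of $H$ in fact yields the stronger rate $(1+aH)e^{-\kappa H}\to 0$. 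This comes from $H\ge (C_1/\mu_0)\ln n\to\infty$, so the decay holds even before dividing by $\ln M$. I would record both observations.

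The main obstacle is bookkeeping rather than analysis. The constant $C$ in the corollary must be allowed to depend on $\mu_0$ (and on $\beta$), not only on the theorem's $C$. Otherwise $h_0\ge r_*$ is not guaranteed when $\mu_0<1$. I would state explicitly that the corollary's $C$ is chosen as $2C_1/\mu_0$, or any larger value. I would also note that $\lambda_2\le 2$ for the normalized Laplacian, so $\mu_0\le 2$ and no degenerate case arises. Finally, the case where $h_0$ grows faster than logarithmically is covered automatically, since the bound is monotone decreasing in $H$ once $H$ is large.
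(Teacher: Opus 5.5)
Your proposal is correct and follows the route the paper intends; the paper gives no separate proof and treats the corollary as an immediate consequence of Theorem~\ref{thm:spectral_bound}, obtained by substituting $\lambda_2(G_n)\ge\mu_0$ into both the threshold $r_*\le C\ln n/\lambda_2$ and the rate $\min\{a,c\lambda_2\}$, with constants kept uniform by Assumption A. Choosing the corollary's constant as $2C_1/\mu_0$ usefully makes explicit the $\mu_0$-dependence the paper leaves implicit, and the resulting $H\to\infty$ also forces $J\le 1/2$ for large $n$, which is the only regime where the paper's proof of Theorem~\ref{thm:spectral_bound} (via $J/(1-J)\le 2J$) is fully rigorous.
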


Corollary \ref{cor:expander_convergence} paints an ideal picture, but in more practical engineering scenarios, we face the reverse design problem: achieving a target under given constraints. 
Theorem \ref{thm:spectral_bound} provides a design manual for two such directions. To simplify the design formulas, we can neglect the linear pre-factor $(1+aH)$, which has a smaller impact when $H$ is large.

\begin{corollary}[Reverse Design: Threshold Lower Bound]
\label{cor:reverse_design_h0}
Under the setting of Theorem \ref{thm:spectral_bound}, to ensure the imbalance index $I$ does not exceed a target value $I_{\text{tar}}$, a sufficient service performance threshold $h_0$ approximately needs to satisfy:
\begin{equation}
h_0 \ge r_* + \frac{1}{\min\{a,\,c\lambda_2\}}\left[\ln\!\frac{C}{I_{\text{tar}}}-\ln\!\big(\ln M\big)\right].
\end{equation}
\end{corollary}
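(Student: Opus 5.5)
The plan is to read Corollary \ref{cor:reverse_design_h0} directly off Theorem \ref{thm:spectral_bound}. We set the right-hand side of that bound, with the linear pre-factor dropped as the text allows, to be at most $I_{\text{tar}}$, and then solve for $h_0$. We work in the regime $h_0 \ge r_*$, so that the theorem applies and $H = h_0 - r_* \ge 0$. We also write $\kappa := \min\{a, c\lambda_2\} > 0$ for the bottleneck rate, which is positive because $a>0$, $c>0$, and $\lambda_2>0$ for a connected graph.

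First, we state the simplified sufficient condition. By Theorem \ref{thm:spectral_bound}, $I \le \frac{C(1+aH)}{\ln M} e^{-\kappa H}$. Replacing $(1+aH)$ by $1$ gives the approximate bound $I \lesssim \frac{C}{\ln M} e^{-\kappa H}$. It therefore suffices to require
\begin{equation*}
\frac{C}{\ln M}\, e^{-\kappa H} \le I_{\text{tar}}.
\end{equation*}
Second, we take logarithms; this is legitimate since all quantities are positive and $M \ge 3$ for $n\ge 3$, so $\ln M>0$. The condition becomes $-\kappa H \le \ln I_{\text{tar}} - \ln C + \ln(\ln M)$. Dividing by $-\kappa<0$ flips the inequality and gives $H \ge \frac{1}{\kappa}\big[\ln\frac{C}{I_{\text{tar}}} - \ln(\ln M)\big]$. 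Substituting $H = h_0 - r_*$ yields exactly the displayed threshold.

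Two points need care, and the second is the only genuine obstacle. The first is consistency with the hypothesis $h_0 \ge r_*$. If the bracket is negative, meaning the target is already met at $H=0$, the displayed inequality would allow $h_0 < r_*$, where Theorem \ref{thm:spectral_bound} says nothing. We would therefore state the requirement as $h_0 \ge r_* + \max\{0, \cdot\}$, or note that the bracket is nonnegative whenever $I_{\text{tar}} \le C/\ln M$. The second is that dropping $(1+aH)\ge 1$ makes the criterion only approximate, which is why the statement says ``approximately''. To make it rigorous, we would absorb the polynomial factor into the exponential: for any $\epsilon\in(0,1)$, $1+aH \le \frac{C_\epsilon}{\ldots}e^{\epsilon \kappa H}$ when $a \le \kappa/\epsilon$. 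A cleaner alternative is to require $H$ to exceed the displayed quantity plus a $\frac{1}{\kappa}\ln(1+aH)$ correction, which is handled by one fixed-point iteration. We would mention this refinement but present the simplified formula, since it is exactly what the corollary asserts under the stated neglect of the pre-factor.
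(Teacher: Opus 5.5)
Your proposal is correct and follows the paper's route. The paper gives no separate proof: it drops the pre-factor $(1+aH)$ from Theorem~\ref{thm:spectral_bound}, imposes $\frac{C}{\ln M}e^{-\min\{a,c\lambda_2\}H}\le I_{\text{tar}}$, and solves for $H=h_0-r_*$, exactly as you do. Your two caveats are valid refinements the paper leaves implicit: clipping the bracket at $0$ so that $h_0\ge r_*$ is not violated, and noting that neglecting $(1+aH)$ is what makes the criterion only approximate. Your $\epsilon$-absorption sketch is unfinished, but the $\frac{1}{\kappa}\ln(1+aH)$ fixed-point correction alone would already make it rigorous.
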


\begin{corollary}[Reverse Design: Spectral Gap Lower Bound]
\label{cor:reverse_design_lambda2}
Under the setting of Theorem \ref{thm:spectral_bound}, given service parameters $(a,h_0)$ and a performance target $I_{\text{tar}}$, if $h_0>r_*$, a sufficient condition for the network spectral gap $\lambda_2$ is approximately:
\begin{equation}
\lambda_2 \ge \frac{1}{c(h_0-r_*)}\left[\ln\!\frac{C}{I_{\text{tar}}}-\ln\!\big(\ln M\big)\right].
\end{equation}
(This formula assumes the system is in a structure-limited state, i.e., $a$ is sufficiently large).
\end{corollary}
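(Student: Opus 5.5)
The statement to prove is Corollary~\ref{cor:reverse_design_lambda2}. It is an algebraic inversion of the bound in Theorem~\ref{thm:spectral_bound}, in the same "approximate" sense as Corollary~\ref{cor:reverse_design_h0}.

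\textbf{Step 1: simplify the bound.} Fix $(a,h_0)$ with $h_0>r_*$, so that $H=h_0-r_*>0$ and the theorem applies. Following the convention adopted just before Corollary~\ref{cor:reverse_design_h0}, drop the linear prefactor $(1+aH)$. This gives the working bound
\begin{equation*}
I \le \frac{C}{\ln M}\exp\!\big(-\min\{a,c\lambda_2\}\,H\big).
\end{equation*}

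\textbf{Step 2: use the structure-limited hypothesis.} Assume $a$ is large enough that $\min\{a,c\lambda_2\}=c\lambda_2$. Then it suffices to have
\begin{equation*}
\frac{C}{\ln M}e^{-c\lambda_2 H}\le I_{\text{tar}}.
\end{equation*}
Taking logarithms, this is equivalent to
\begin{equation*}
c\lambda_2 H \ge \ln\frac{C}{I_{\text{tar}}}-\ln(\ln M).
\end{equation*}
Dividing by $cH>0$ yields exactly the stated inequality on $\lambda_2$. If the right-hand side is nonpositive, the condition holds trivially.

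\textbf{Step 3: check the min-case assumption.} Under the stated inequality, either of two things holds. If $c\lambda_2\le a$, we are in the structure-limited case and Step 2 applies directly. If $c\lambda_2>a$, the true decay rate is $a$, and the bound is only guaranteed when $a$ itself exceeds the required threshold. This is precisely the content of the "$a$ sufficiently large" proviso. So I would state that the proviso means
\begin{equation*}
a \ge \frac{1}{H}\left[\ln\frac{C}{I_{\text{tar}}}-\ln(\ln M)\right].
\end{equation*}
Under this condition, in both cases $\min\{a,c\lambda_2\}\,H$ is at least the required log ratio.

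\textbf{Main obstacle.} The delicate point is that $r_*$ itself depends on $\lambda_2$, since $r_*\le C\ln n/\lambda_2$. The inequality is therefore implicit in $\lambda_2$, and increasing $\lambda_2$ also increases $H$. I would note that $r_*$ decreases in $\lambda_2$, so any $\lambda_2$ satisfying the bound for the current $r_*$ remains sufficient. Alternatively, for an explicit sufficient condition, substitute $r_*\le C\ln n/\lambda_2$ and solve the resulting inequality
\begin{equation*}
c\lambda_2 h_0 - cC\ln n \ge \ln\frac{C}{I_{\text{tar}}}-\ln(\ln M)
\end{equation*}
for $\lambda_2$. This closes the argument up to the approximation made in Step 1.
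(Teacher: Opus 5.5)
Your proposal is correct and is essentially the paper's own implicit derivation: drop the $(1+aH)$ prefactor as the paper does just before Corollary~\ref{cor:reverse_design_h0}, set $\min\{a,c\lambda_2\}=c\lambda_2$, and invert. Your Step~3 usefully makes the ``$a$ sufficiently large'' proviso precise as $aH\ge\ln(C/I_{\text{tar}})-\ln(\ln M)$.

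One side remark should go. In Lemma~\ref{lem:spectral_tail_bound}, $r_*$ is a ball-growth quantity, not a monotone function of $\lambda_2$. A barbell of two cliques has smaller $\lambda_2$ than a cycle of the same size, yet its $r_*$ is $2$ rather than roughly $n/4$. So instead rely on your substitution $r_*\le C\ln n/\lambda_2$, which is legitimate because the tail bound stays valid when $r_*$ is replaced by any larger value.
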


\subsection{Discussion on the Tightness of the Bound}
It is important to emphasize that the tightness of the upper bound in Theorem \ref{thm:spectral_bound} varies across different types of networks. The theoretical rate is relatively tighter on topologies with clear structural bottlenecks (such as a path graph $P_n$), where it aligns well with the upper limit of actual performance. In contrast, for highly connected topologies (like expander graphs), whose performance is already excellent, the spectral bound serves more as a theoretical worst-case guarantee. This point will be clearly visualized in the experimental validation in Section \ref{sec:experiments}.

\section{Engineering Implications and Design Principles}
\label{sec:engineering_implications}
The ultimate value of a profound theory lies in its ability to guide engineering practice. 
In the previous section, we established the core framework connecting network topology and service fairness (Theorem \ref{thm:spectral_bound}) and revealed its inherent bottleneck effect (Proposition \ref{prop:bottleneck_effect}). 
This section aims to distill these theoretical insights into a series of clear design principles that can be directly applied by network architects and engineers.

\subsection{Derivation and Application of the Bottleneck-Driven Design Principle}
The inequality from Theorem \ref{thm:spectral_bound}, $I \le \frac{C(1+aH)}{\ln M} \exp(-\min\{a, c\lambda_2\}H)$, is our starting point for deriving design principles. Taking the logarithm of both sides, we get:
\begin{equation}
\ln I \le \ln\left(\frac{C(1+aH)}{\ln M}\right) - \min\{a, c\lambda_2\}H,
\end{equation}
This equation indicates that above the performance threshold ($H > 0$), the logarithmic upper bound $\ln I$ is primarily dominated by the linear decay term associated with the effective threshold $H$. 
We can define a logarithmic sensitivity to measure the rate of fairness improvement. When $H$ is large, the logarithmic term of the linear pre-factor, $\ln(1+aH)$, grows slowly and can be ignored. In this case, the sensitivity is approximately:
\begin{equation}
-\frac{\partial (\ln I_{\text{upper bound}})}{\partial h_0} \approx \min\{a, c\lambda_2\}.
\end{equation}
This sensitivity quantifies the exponential improvement rate of the QoE imbalance index $I$ for each unit increase in the performance threshold $h_0$. Maximizing this improvement rate is a core goal of network optimization. Combined with the bottleneck effect from Proposition \ref{prop:bottleneck_effect}, we derive the bottleneck-driven design principle:

To maximize the improvement rate $\min\{a, c\lambda_2\}$, resource investment should be prioritized to enhance the smaller of $a$ and $c\lambda_2$. This allows us to classify the system state into two regimes:

\begin{itemize}
    \item Service-Limited State: When $a \ll c\lambda_2$, the improvement rate is approximately $a$. Here, the fairness bottleneck lies in the service parameter $a$ (e.g., the evaluation algorithm is too lenient). Investing in upgrading the network topology ($\lambda_2$) has limited effect on the improvement rate.
    
    \item Structure-Limited State: When $a \gg c\lambda_2$, the improvement rate is approximately $c\lambda_2$. Here, the fairness bottleneck lies in the connectivity of the network topology, $\lambda_2$. Further tightening the service parameter $a$ yields diminishing returns for the improvement rate.
\end{itemize}

This principle provides a clear, quantitative basis for resource allocation decisions in network optimization.

\subsection{Structure Optimization Based on Spectral Theory}
According to the analysis in Section 4.1, when the system is in a structure-limited state, the focus of optimization is to increase the network's spectral gap $\lambda_2$. This leads to a concrete question: how can we most effectively intervene in the network topology to achieve this goal? The following theorem provides an intervention strategy based on the Fiedler vector with theoretical guarantees \cite{1}.

\begin{theorem}[Fiedler Vector-Guided Structural Intervention]
\label{thm:fiedler_intervention}
Let $\mathbf v$ be the Fiedler vector of the normalized Laplacian $\mathcal L$. Consider adding an edge $(i,j) \notin E$ with a small weight $\varepsilon>0$. There exists a constant $C_{\deg}>0$ depending only on the degree ratio such that if the sufficient condition
\begin{equation}
\left(\frac{v_i}{\sqrt{d_i}}-\frac{v_j}{\sqrt{d_j}}\right)^2
\;>\;
C_{\deg}\left(\frac{v_i^2}{d_i}+\frac{v_j^2}{d_j}\right).
\end{equation}
is met, then for all sufficiently small $\varepsilon>0$, the spectral gap strictly increases, i.e., $\lambda_2(\mathcal L(\varepsilon))>\lambda_2(\mathcal L)$. Moreover, selecting the pair $(i,j)$ that maximizes the normalized Fiedler distance is the first-order optimal heuristic to satisfy this condition.
\end{theorem}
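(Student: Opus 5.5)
The plan is a first-order eigenvalue perturbation argument for the one-parameter family $\mathcal L(\varepsilon) = I - D(\varepsilon)^{-1/2} W(\varepsilon) D(\varepsilon)^{-1/2}$. Here $W(\varepsilon) = W + \varepsilon(e_ie_j^\top + e_je_i^\top)$ and $D(\varepsilon) = D + \varepsilon(e_ie_i^\top + e_je_j^\top)$. Throughout I assume $\lambda_2$ is a simple eigenvalue with unit Fiedler vector $\mathbf v$; the repeated case is discussed at the end. Since $\mathcal L(\varepsilon)$ is real symmetric and analytic in $\varepsilon$ near $0$, the standard Hellmann--Feynman formula applies to the simple eigenvalue branch through $\lambda_2$:
\[
\frac{d\lambda_2}{d\varepsilon}\Big|_{\varepsilon=0} = \mathbf v^\top \mathcal L'(0)\,\mathbf v .
\]
The eigenvalue $\lambda_1 = 0$ persists for every $\varepsilon$, with eigenvector $D(\varepsilon)^{1/2}\mathbf 1$. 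Since $\lambda_3 > \lambda_2$ strictly, continuity guarantees that for small $\varepsilon$ this branch is still the second smallest eigenvalue. Therefore a strictly positive derivative yields $\lambda_2(\mathcal L(\varepsilon)) > \lambda_2(\mathcal L)$ for all sufficiently small $\varepsilon > 0$.

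The main computation is the derivative. Write $A = D^{-1/2} W D^{-1/2}$ and use $\frac{d}{d\varepsilon} D^{-1/2} = -\tfrac12 D^{-3/2} D'$. This gives
\[
A'(0) = D^{-1/2}(e_ie_j^\top + e_je_i^\top)D^{-1/2} - \tfrac12\big(D^{-1}D' A + A D' D^{-1}\big).
\]
Next use $A\mathbf v = (1-\lambda_2)\mathbf v$ and set $x_k = v_k/\sqrt{d_k}$. The quadratic form becomes
\[
\mathbf v^\top A'(0)\mathbf v = 2x_ix_j - (1-\lambda_2)(x_i^2 + x_j^2).
\]
Hence
\[
\frac{d\lambda_2}{d\varepsilon}\Big|_{0} = -\mathbf v^\top A'(0)\mathbf v = (x_i - x_j)^2 - \lambda_2\,(x_i^2 + x_j^2).
\]
This is positive exactly when $(x_i - x_j)^2 > \lambda_2(x_i^2 + x_j^2)$. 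Because every eigenvalue of the normalized Laplacian lies in $[0,2]$, we have $\lambda_2 \le 2$. So the stated condition with $C_{\deg} = 2$ (or any constant $C_{\deg} \ge \lambda_2$) implies positivity of the derivative. This constant trivially depends only on the degree ratio, as claimed.

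For the optimality clause, the expansion gives
\[
\lambda_2(\mathcal L(\varepsilon)) = \lambda_2 + \varepsilon\big[(x_i - x_j)^2 - \lambda_2(x_i^2 + x_j^2)\big] + O(\varepsilon^2).
\]
Ranking candidate pairs by the first-order coefficient is thus the first-order optimal rule. The dominant part of this coefficient is the normalized Fiedler distance $(x_i - x_j)^2$. Moreover, the subtracted term is at most $\lambda_2$ times a quantity that is itself bounded in terms of $(x_i - x_j)^2$ whenever $x_i$ and $x_j$ have opposite signs. So maximizing the normalized Fiedler distance, which typically picks a pair from opposite sides of the Fiedler cut, is the natural heuristic for meeting the condition.

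I expect the main obstacle to be the degenerate case where $\lambda_2$ has multiplicity greater than one. There the branch is not differentiable in the simple sense. I would handle it by noting that the perturbed eigenvalues emanating from $\lambda_2$ have derivatives equal to the eigenvalues of the compression of $\mathcal L'(0)$ to the $\lambda_2$-eigenspace. The sufficient condition must then be imposed for every unit vector in that eigenspace, with $\mathbf v$ interpreted accordingly. Alternatively, one can simply state the simplicity of $\lambda_2$ as a standing assumption.
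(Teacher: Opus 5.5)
Your argument is correct and follows the paper's skeleton: Hellmann--Feynman applied to $\mathcal L(\varepsilon)=I-D(\varepsilon)^{-1/2}W(\varepsilon)D(\varepsilon)^{-1/2}$, with the product-rule expansion of $D^{-1/2}WD^{-1/2}$ into an edge term and two degree-perturbation terms. You close it differently, though.

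The paper isolates the degree terms $\sum_{u\in\{i,j\}}x_u(Wx)_u/d_u$ and then appeals to an unspecified ``series of inequality relaxations'' in terms of $\Delta/\delta$ to reach the lower bound $(x_i-x_j)^2-C_{\deg}(x_i^2+x_j^2)$. Its sketch also writes the edge contribution with the opposite sign. You instead evaluate the degree terms exactly through the eigen-equation $Wx=(1-\lambda_2)Dx$, which gives the exact derivative $(x_i-x_j)^2-\lambda_2(x_i^2+x_j^2)$. This buys three things:
\begin{itemize}
\item the inequality becomes an identity with $C_{\deg}=\lambda_2$;
\item the degree-ratio hypothesis turns out to be unnecessary;
\item it shows that the exactly first-order-optimal rule maximizes $(x_i-x_j)^2-\lambda_2(x_i^2+x_j^2)$. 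The normalized Fiedler distance is only its leading part when $\lambda_2$ is small, which is a more honest reading of the ``first-order optimal heuristic'' clause than the paper's.
\end{itemize}
You also handle simplicity of $\lambda_2$ and persistence of the eigenvalue ordering, which the paper skips.

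One thing to fix: your headline choice $C_{\deg}=2$ makes the result vacuous. Since $(x_i-x_j)^2\le 2(x_i^2+x_j^2)$ always, the strict hypothesis can never hold. Take $C_{\deg}=1$ instead. Because $(i,j)\notin E$, the graph is not complete, so $\lambda_2\le 1$. To see this, evaluate the Rayleigh quotient $\sum_{uw\in E}(f_u-f_w)^2/\sum_u d_u f_u^2$ at $f_i=d_j$, $f_j=-d_i$, and $f=0$ elsewhere. This $f$ is $D$-orthogonal to $\mathbf{1}$, and the quotient equals exactly $1$ because $i\not\sim j$. With $C_{\deg}=1$ the condition reduces to $v_iv_j<0$, meaning the new edge crosses the sign cut of the Fiedler vector. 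Your formula then gives a derivative of at least $-2x_ix_j>0$.
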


The engineering significance of this theorem is that it transforms the problem of network structure optimization from blind ``edge addition" trials into an algorithmic step with clear mathematical guidance. It provides a concrete, actionable path to resolve the structure-limited problem.

In conclusion, through the theoretical derivation and numerical analysis in this paper, the coordination between SLA and network imbalance has formed a complete set of engineering practice optimization ideas. Fig. \ref{fig:optimization} presents a closed-loop optimization process of bottlenecks-driven design principle, integrating the aforementioned theorems and corollaries into the same decision framework, and using it to guide engineers in iteratively improving the QoE fairness in actual networks.

\begin{figure}[!t]
\centering
\includegraphics[width=3.5in]{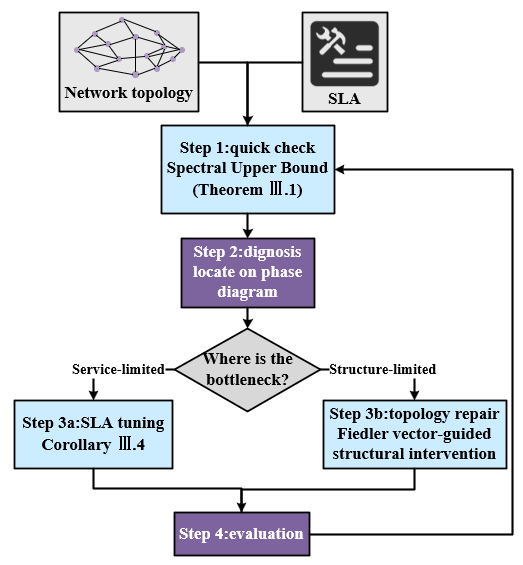}
\caption{Closed-loop optimization flowchart}
\label{fig:optimization}
\end{figure}

\subsection{Low-Complexity Calculation of Performance Upper Bounds}
The Fiedler vector-based intervention method described above relies on spectral computation, which can be costly for large-scale networks. For scenarios requiring rapid evaluation or real-time monitoring, we need a less complex performance assessment method. The following proposition provides such an alternative.

\begin{proposition}[Data-Driven Performance Certificate]
\label{prop:data_driven_cert}
For any given partition radius $0 < r \le h_0$, the average tail probability $J$ is bounded by $J \le \tau_r + e^{-a(h_0-r)}$. Consequently, using the relationship $D_{KL}(p\|u) \le J/(1-J)$, if $\tau_r + e^{-a(h_0-r)} \le 1/2$, a simpler bound holds:
\begin{equation}
I = \frac{D_{KL}(p\|u)}{\ln M} \le \frac{2}{\ln M}\Big(\tau_r(G) + e^{-a(h_0-r)}\Big).
\end{equation}
In general, the denominator in $J/(1-J)$ can be absorbed into a constant factor if $J$ is bounded away from 1.
\end{proposition}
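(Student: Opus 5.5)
The plan is to make precise the two objects the statement leaves implicit and then chain two elementary inequalities. I take the average tail probability to be the mean dissatisfaction $J := \frac{1}{M}\sum_{\{u,v\}\in K}(1-w_{uv})$. Equivalently, $\bar w := \frac{1}{M}\sum w_{uv} = 1-J$. I take $\tau_r(G)$ to be the fraction of pairs whose hop distance exceeds the partition radius, $\tau_r(G) := \frac{1}{M}\,|\{\{u,v\}\in K : h(u,v) > r\}|$. This is the single network-wide distance statistic mentioned in the contributions. The proof then has three steps: bound $J$, bound the KL divergence by $J/(1-J)$, and specialize.

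\emph{Step 1 (tail bound).} I split $K$ into near pairs ($h(u,v)\le r$) and far pairs ($h(u,v)>r$). For every pair, $1-w_{uv} = \frac{1}{1+\exp(-a[h(u,v)-h_0])} \le \min\{1,\ e^{a(h(u,v)-h_0)}\}$.
\begin{itemize}
\item On far pairs I use the trivial bound $1$. These pairs contribute at most $\tau_r(G)$ to $J$.
\item On near pairs, $h(u,v)-h_0 \le r-h_0 \le 0$, so each term is at most $e^{-a(h_0-r)}$. These pairs contribute at most $(1-\tau_r)e^{-a(h_0-r)} \le e^{-a(h_0-r)}$.
\end{itemize}
Summing the two contributions gives $J\le \tau_r + e^{-a(h_0-r)}$.

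\emph{Step 2 (KL versus $J$).} Since $p_{uv} = w_{uv}/(M\bar w)$ and $u_{uv}=1/M$, we have
\begin{equation*}
D_{KL}(p\|u)=\frac{1}{M}\sum_{\{u,v\}\in K}\frac{w_{uv}}{\bar w}\ln\frac{w_{uv}}{\bar w}.
\end{equation*}
Apply $\ln x\le x-1$ termwise; this is valid because each coefficient $w_{uv}/\bar w$ is nonnegative. The result is
\begin{equation*}
D_{KL}(p\|u)\le \frac{\overline{w^2}}{\bar w^{2}}-1,
\end{equation*}
where $\overline{w^2}$ denotes the mean of $w_{uv}^2$. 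Because $0<w_{uv}\le 1$, we have $w_{uv}^2\le w_{uv}$, so $\overline{w^2}\le \bar w$. This yields $D_{KL}\le 1/\bar w-1 = J/(1-J)$, which is the stated relationship. (This is a chi-square-type bound on the KL divergence, exploiting that the weights are bounded by one.)

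\emph{Step 3 (conclusion).} The map $x\mapsto x/(1-x)$ is increasing on $[0,1)$. Hence, if $\tau_r+e^{-a(h_0-r)}\le 1/2$, Step 1 gives $J\le 1/2$, and then $J/(1-J)\le 2J\le 2(\tau_r+e^{-a(h_0-r)})$. Dividing by $\ln M$ gives the displayed bound. More generally, if $J\le J_0<1$, the same argument gives the constant $1/(1-J_0)$ in place of $2$.

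The main obstacle is not computational but definitional. The statement never fixes $J$ and $\tau_r$, so the argument must commit to the definitions above, and it must use the form $1-w$ in the tail bound so that the sigmoid is dominated by an exponential only on near pairs. I expect Step 2 to be the only step requiring care: one must check that the termwise use of $\ln x\le x-1$ is legitimate and that $w\le1$ is what turns the second moment into $1/\bar w$.
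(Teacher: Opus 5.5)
Your proof is correct, and the definitions you commit to for $J$ and $\tau_r$ match the paper's ($J=\mathbb{E}_{\mathbf u}[1-w(h)]$, and $\tau_r$ is the fraction of pairs at distance greater than $r$). The skeleton is also the paper's: split at radius $r$, then $D_{KL}\le\chi^2\le J/(1-J)$, then $J\le 1/2$. Both ingredients, however, are obtained differently.

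For the bound on $J$, the paper uses the layer-cake identity $J=\int_{-\infty}^{\infty}\tau_\rho\,a\,s(\rho)(1-s(\rho))\,d\rho$ and splits the \emph{radius} integral at $\rho=r$. Below $r$ it uses $\tau_\rho\le 1$ and $s(1-s)\le e^{-a|\rho-h_0|}$, which is where $r\le h_0$ is needed. Above $r$ it uses $\tau_\rho\le\tau_r$ and the fact that the logistic density has mass at most one. Your near/far split of the \emph{pairs} is the Fubini-dual of this. It avoids the integral representation, does not actually need $r\le h_0$, and gives the slightly sharper $(1-\tau_r)e^{-a(h_0-r)}$.

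For $D_{KL}\le J/(1-J)$, the paper writes $p=\int_0^1\alpha(s)\,u_{\le r(s)}\,ds$ as a mixture of uniform distributions on distance balls. It then combines convexity of $\chi^2(\cdot\|u)$ with $\chi^2(u_{\le r}\|u)=\tau_r/(1-\tau_r)$. You instead compute $\chi^2(p\|u)=\overline{w^2}/\bar w^{2}-1$ exactly and use $w\le 1$.

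The paper's route buys reuse. Its representation of $J$ is the same one into which the spectral tail bound is substituted in Theorem~\ref{thm:J_bound}. The certificate is therefore the crude version of the spectral argument, with $\tau_\rho\le 1$ and $\tau_\rho\le\tau_r$ in place of the Cheeger-based estimate, and the mixture representation also yields the TV bound.

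Your route is self-contained and needs neither Fubini nor convexity of $f$-divergences. It also exposes the exact slack $\overline{w(1-w)}/\bar w^{2}$ in $\chi^2(p\|u)\le J/(1-J)$. Finally, your monotonicity argument for $x\mapsto x/(1-x)$ pins down the constant $2$, which the paper's proof only records as $I\le C\cdot J/\ln M$.
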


The practical value of this proposition lies in its computational simplicity. The calculation of this upper bound does not depend on any spectral quantities. 
Its core input, $\tau_r(G)$, can be obtained through a one-time statistical analysis using standard shortest-path algorithms (such as all-pairs breadth-first search (BFS)). 
Therefore, it can serve as a low-cost check tool for performance, suitable for rapid performance audits of large-scale networks.

\section{Experimental Validation}
\label{sec:experiments}

This section aims to substantiate the theoretical framework established in Section \ref{sec:main_results} and the engineering principles derived in Section \ref{sec:engineering_implications}. 
To this end, we conduct extensive numerical experiments with two primary goals. The first goal is to verify the validity and universality of the threshold-dependent exponential spectral upper bound (Theorem \ref{thm:spectral_bound}). 
The second goal is to visualize key engineering insights to demonstrate the practical guidance value of our theoretical framework.

\subsection{Experimental Setup}
To ensure the universality and robustness of our conclusions, our experiments cover the following three representative classes of network topologies:
(i) Synthetic Models, including the Erdős-Rényi (ER), the Watts-Strogatz (WS), and the Barabási-Albert (BA) models;
(ii) a real-world topology constructed from CAIDA's AS-relationship dataset by extracting the largest connected component (LCC) of the 2025-08 snapshot (serial-1, ``20250801.as-rel.txt.bz2"), with reproducibility fingerprints reported (MD5: e1a72d19ddc87b3faba86aa8a9aebab9; 78215 nodes and 570390 edges; $\lambda_2 \approx 0.00246$); 
and (iii) an extreme-structure reference, the path graph ($P_n$), to probe bound tightness under bottleneck conditions.

On our selected topologies, we systematically scanned the SLA parameter space. In particular, the spectral gap $\lambda_2$ was computed using efficient iterative eigenvalue methods (e.g., Lanczos). The QoE imbalance index $I$ was efficiently estimated by large-scale random sampling of node pairs (its reliability is supported by Proposition \ref{prop:sampling_consistency}). For the parameters in the bound, we treated the scaling factor $c$ as a global fitting constant to enable unified cross-topology comparisons and to directly test the predicted functional form. Conversely, the threshold $r_*$ was a graph-family-dependent parameter. We computed $r_*$ directly according to the definition provided in Appendix B.

\subsection{Validation of the Spectral Upper Bound Envelope}
The core prediction of Theorem \ref{thm:spectral_bound} is that the QoE imbalance index $I$ is constrained by an exponential decay determined by $\min\{a, c\lambda_2\}H$, where $H=h_0-r_*$. 
To directly verify this prediction, we designed a semi-log envelope plot (see Fig. \ref{fig:envelope}). In the plot, the y-axis is the logarithmic imbalance $\ln(I)$, and the x-axis is the theoretically predicted ``effective work", $\min\{a, c\lambda_2\}(h_0-r_*)$.

\begin{figure}[!t]
\centering
\includegraphics[width=3.5in]{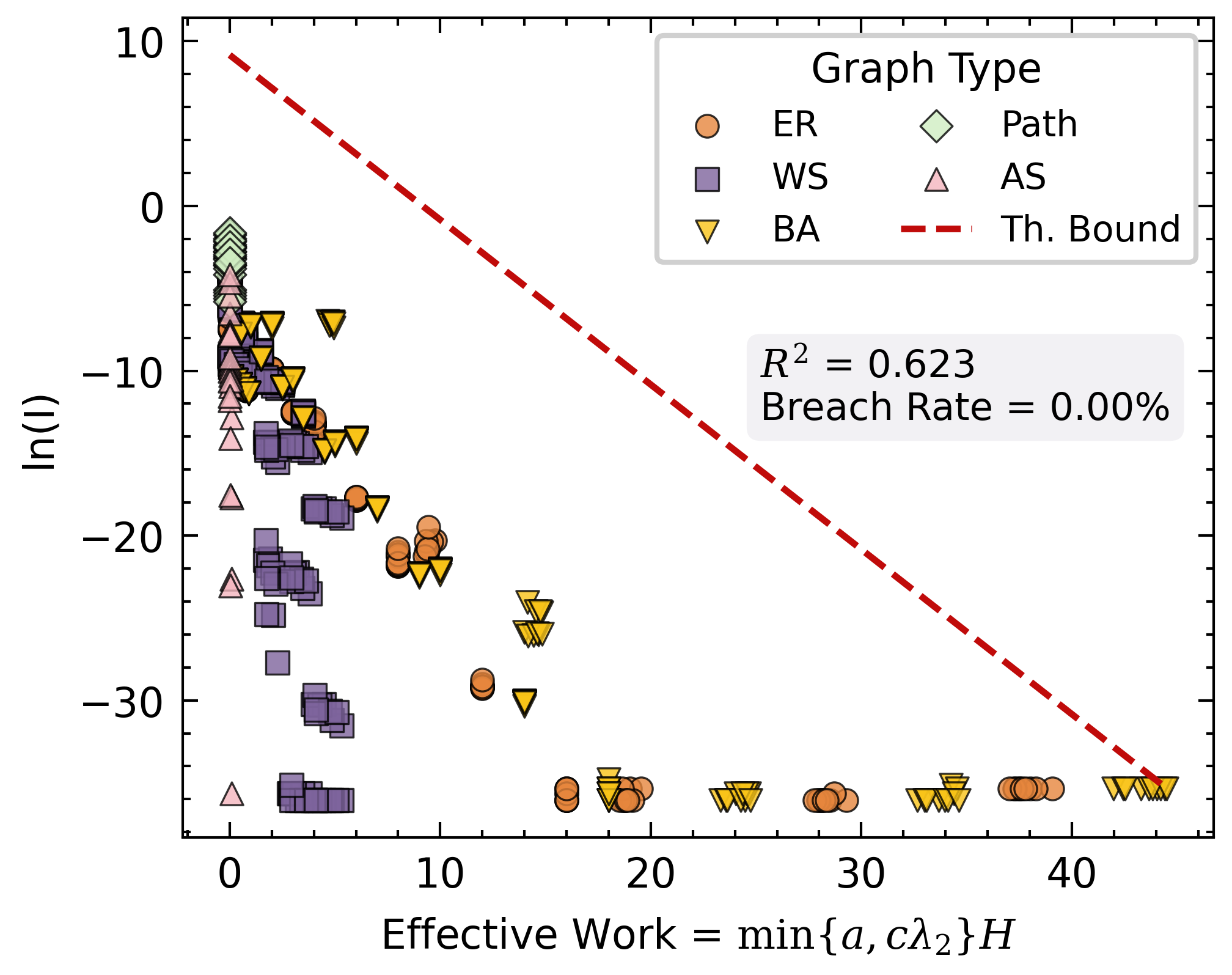}
\caption{Spectral upper bound envelope verification plot. The plot includes simulation results from all network models. Each data point represents a measurement from an independent combination (network, a, $h_0$). The red dashed line indicates the theoretical upper bound with a slope of -1. All data points are strictly located below this boundary and exhibit a significant correlation.}
\label{fig:envelope}
\end{figure}

As shown in Fig. \ref{fig:envelope}, the experimental results provide strong empirical support for Theorem \ref{thm:spectral_bound}. 
Firstly, the simulation experiments only used WS/ER/Path topologies that satisfy Assumption A to fit the global constant $c$. Then, the constant was fixed and the verification was extended on the BA/AS topologies that did not participate in the fitting. 
Across all scanned SLA parameters and topology types, we observe that $\ln(I)$ is constrained by the theoretical boundary (breach rate of 0.00\%). 
Moreover, the upper right boundary of the data cloud presents an exponential envelope shape consistent with the theory, with a slope of -1, providing direct visual evidence for the exponential upper bound form of Theorem \ref{thm:spectral_bound}. 

Importantly, the simulation results of the WS/ER/Path topologies can directly reflect the validity and correctness of Theorem \ref{thm:spectral_bound} under Assumption A. 
In contrast, BA/AS graphs typically exhibit heavy-tailed degree distributions and may violate Assumption A. Therefore, the results of the BA/AS graphs indicate that the spectral upper bound theorem still shows empirical robustness on heavy-tailed topologies. 
But due to the extremely large degree ratio, the constraint of the upper bound is relatively loose.

\subsection{Visualization of the Bottleneck Effect and the Design Phase Diagram}
To visualize the bottleneck effect, we plotted a design phase diagram of the exponential decay rate $\gamma = \min\{a, c\lambda_2\}$ in the $(a, \lambda_2)$ parameter plane.

\begin{figure}[!t]
\centering
\includegraphics[width=3.5in]{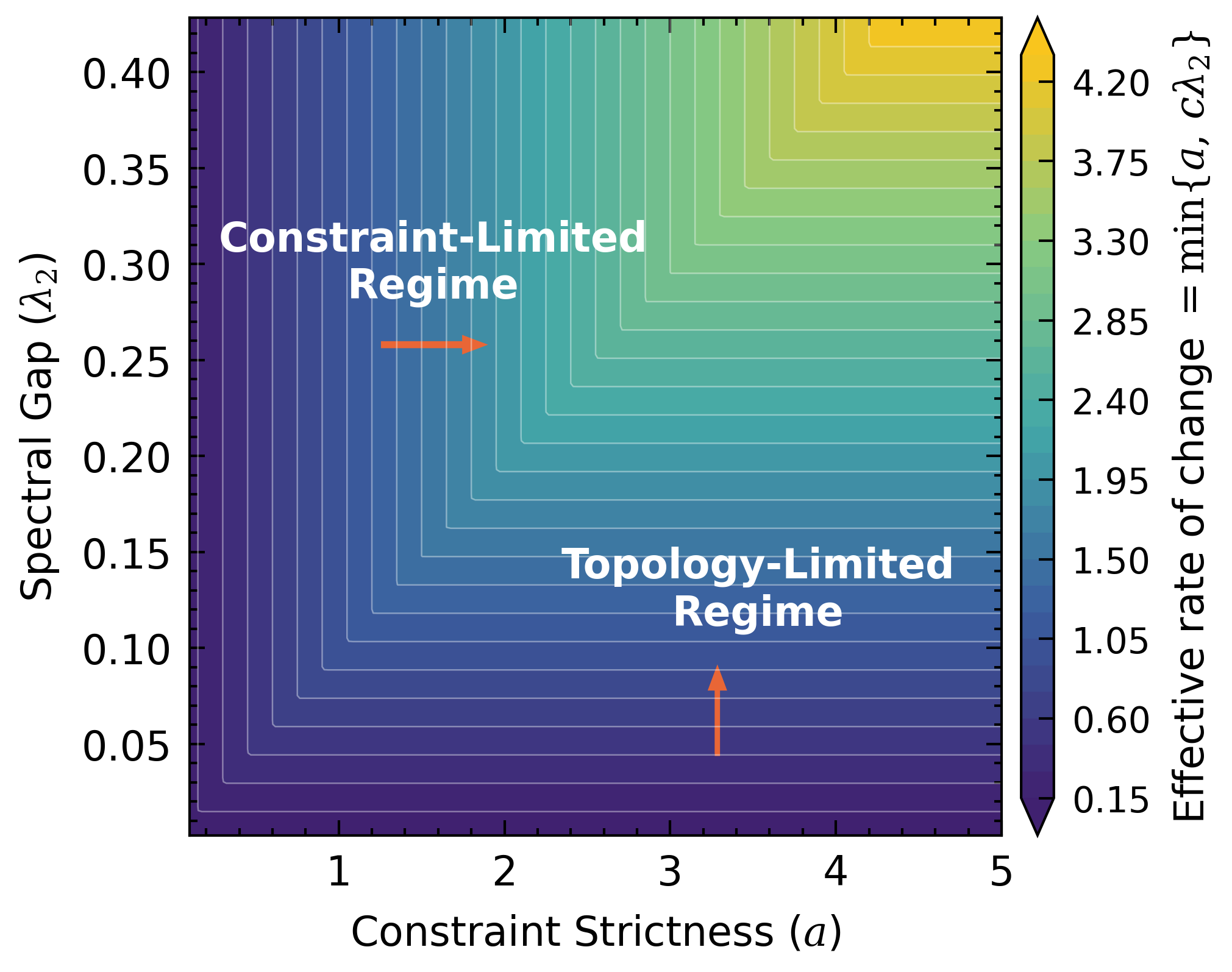}
\caption{Design phase diagram. The x-axis and y-axis represent the service evaluation stringency $a$ and network connectivity $\lambda_2$ (scaled by $c$), respectively, both on a logarithmic scale. The color represents the magnitude of the exponential decay rate $\gamma$.}
\label{fig:phase_diagram}
\end{figure}

As shown in Fig. \ref{fig:phase_diagram}, this phase diagram intuitively reveals the location of  system performance bottlenecks. 
Due to the nature of the ``min" function, the phase diagram is clearly divided by a diagonal line into two phase regions:
\begin{enumerate}
    \item Service-Limited Region (Upper-Left Area): When $a \ll c\lambda_2$, the contour lines are vertical, indicating that the exponential decay rate rate $\gamma \approx a$ is completely determined by the SLA parameter $a$.
    \item Structure-Limited Region (Lower-Right Area): When $a \gg c\lambda_2$, the contour lines are horizontal, indicating that the exponential decay rate rate $\gamma \approx c\lambda_2$, and the bottleneck lies in the network topology.
\end{enumerate}
This phase diagram provides the most direct empirical evidence for the bottleneck-driven design principle from section 4.1.

\subsection{Validation of the Data-Driven Certificate's Effectiveness}
This section aims to experimentally validate the effectiveness of Proposition \ref{prop:data_driven_cert}, as a computationally convenient performance certificate. We selected four network models: ER, BA, WS, and Path graphs. Under fixed SLA parameters, we calculated their actual QoE imbalance index $I$ and the theoretical upper bound obtained from Proposition \ref{prop:data_driven_cert}.

\begin{figure}[!t]
\centering
\includegraphics[width=3.5in]{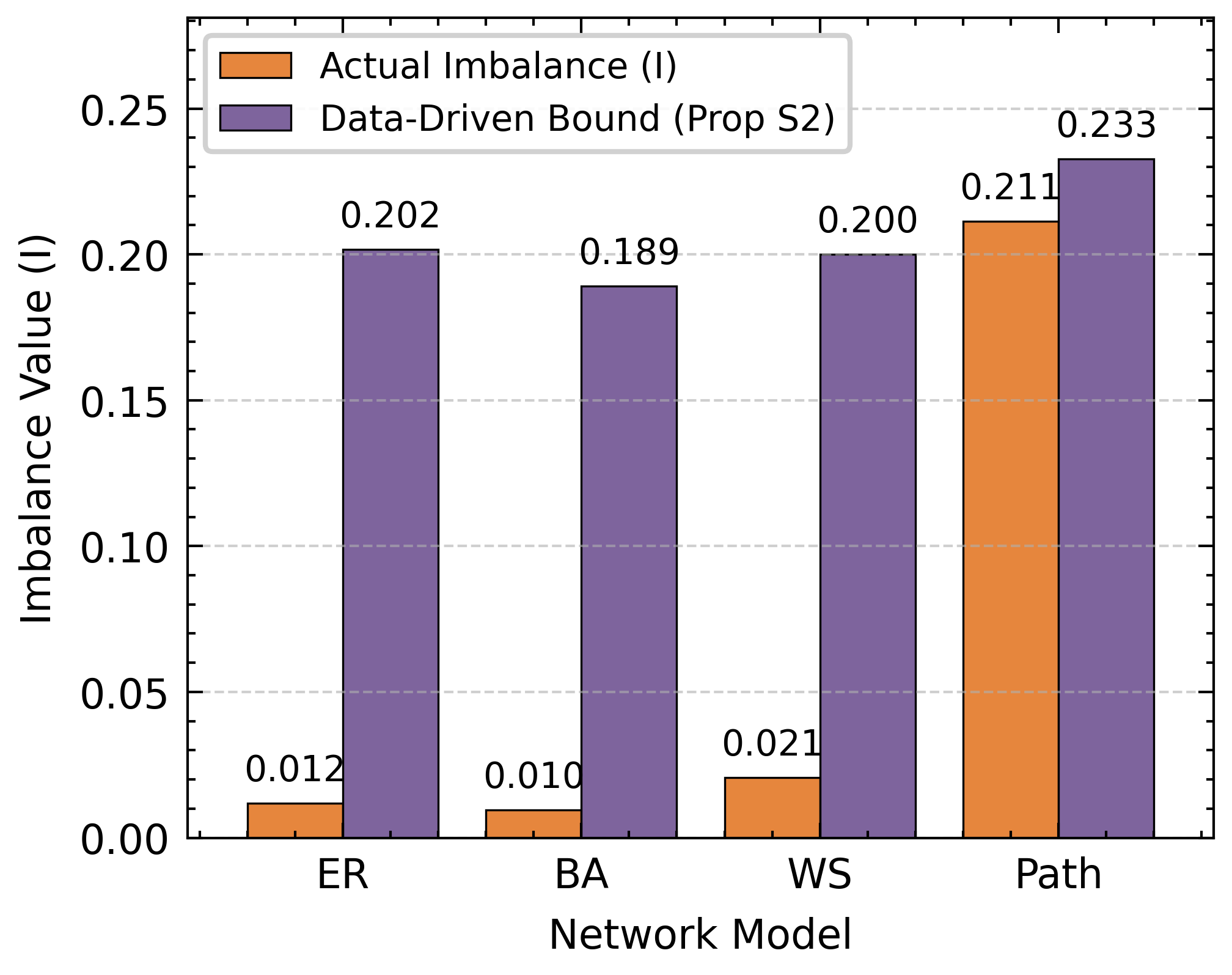}
\caption{Validation of the data-driven certificate's effectiveness. This bar chart compares the actual $I$ values with the data-driven upper bound calculated according to Proposition \ref{prop:data_driven_cert} for four network models.}
\label{fig:s2_bound}
\end{figure}

As illustrated in Fig. \ref{fig:s2_bound}, the experimental results clearly confirm the correctness of this upper bound. For all tested network types, the bars representing the ``theoretical upper bound" consistently exceed those representing the ``actual value". This evidence established Proposition \ref{prop:data_driven_cert} as a reliable and computationally low-cost performance proof.

\subsection{Case Study: Fiedler Vector-Guided Structural Intervention}
This section aims to experimentally validate the effectiveness of Theorem \ref{thm:fiedler_intervention} through a case study. We chose to experiment on a BA scale-free network. 
The first step is that we calculated the QoE imbalance index $I$ and the spectral gap $\lambda_2$ of the original network. 
Then, according to Theorem \ref{thm:fiedler_intervention}, we computed its Fiedler vector $\mathbf{v}$ and identified the non-adjacent node pair that maximizes the normalized Fiedler distance $(v_i/\sqrt{d_i} - v_j/\sqrt{d_j})^2$ to add a connection. 
Finally, we repeated this ``dianose-and-add" process several times, recording the changes in $I$ and $\lambda_2$ after each intervention.

\begin{figure}[!t]
\centering
\includegraphics[width=3.5in]{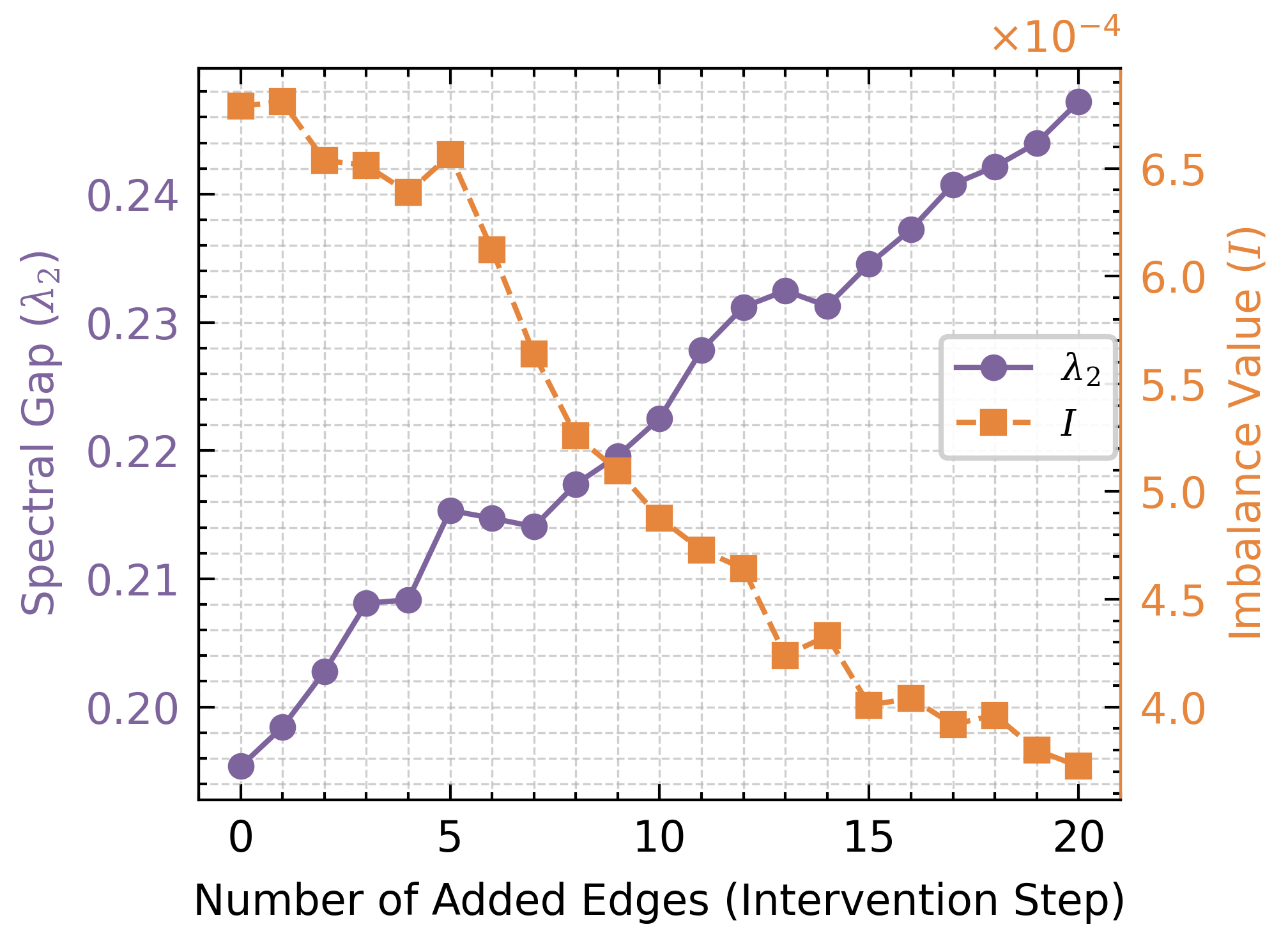}
\caption{Effectiveness of Fiedler vector-guided structural intervention. The x-axis is the number of added edges, the left y-axis is the spectral gap $\lambda_2$, and the right y-axis is the QoE imbalance index $I$.}
\label{fig:fiedler_intervention}
\end{figure}

As shown in Fig. \ref{fig:fiedler_intervention}, the experimental results provide clear evidence for the effectiveness of Theorem \ref{thm:fiedler_intervention}. 
Following each edge addition based on the normalized Fiedler distance, the network's algebraic connectivity $\lambda_2$ exhibits a significant and monotonic increasing trend. 
Correspondingly, the QoE imbalance index $I$ shows a monotonic decreasing trend. This demonstrates that topology optimization guided by spectral theory is a fundamental and reliable pathway for enhancing overall network fairness.

To evaluate the efficiency of this method in addressing structure-constrained problems, we compared the Fiedler vector-guided structural intervention strategy with three baselines: random edge addition strategy, min-degree strategy, and betweenness centrality strategy.

\begin{figure*}[!t]
  \centering
  \subfloat[Spectral gap variation]{
    \includegraphics[width=0.48\textwidth]{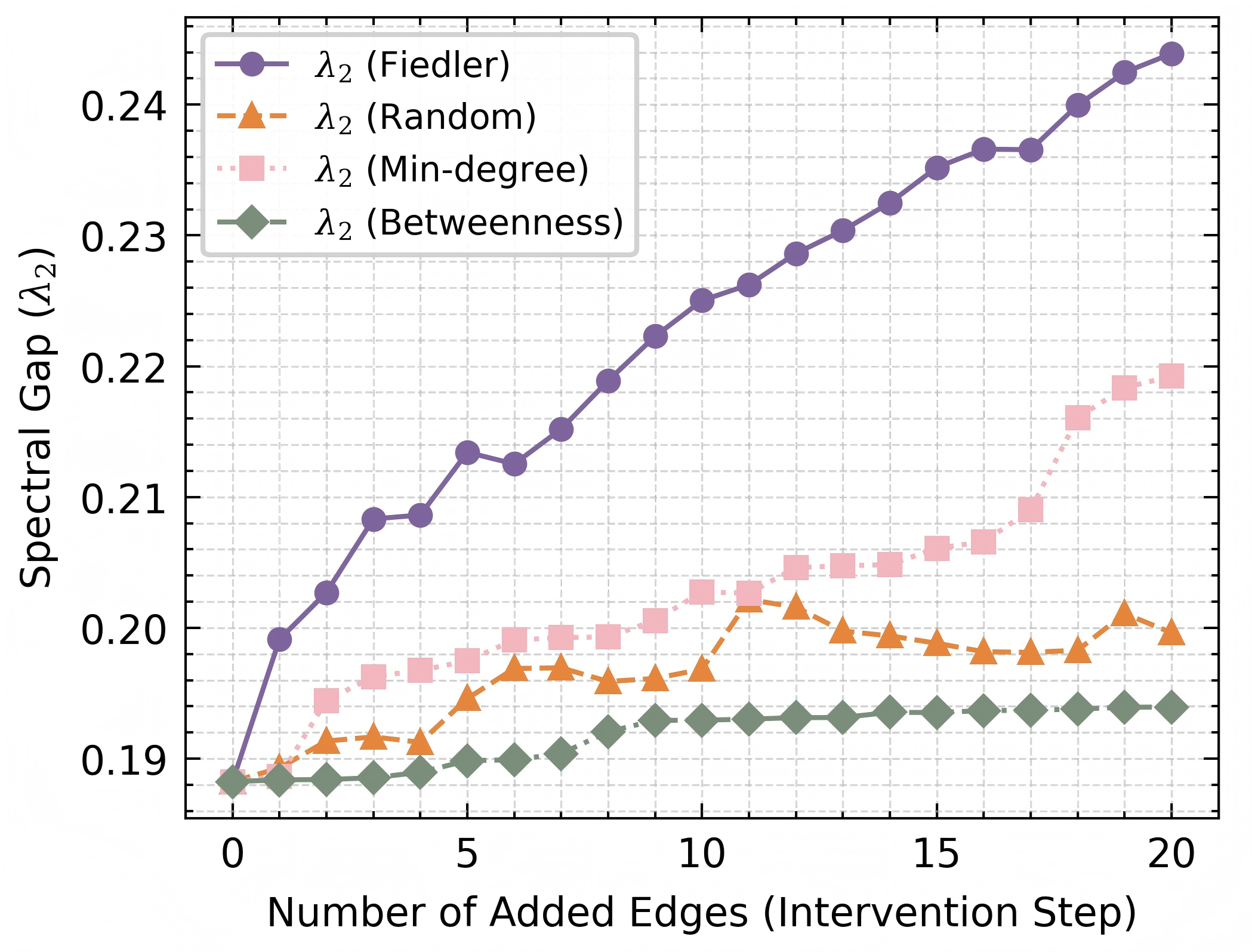}
    \label{fig:fiedler_comparison_lamad2}
  }\hfill
  \subfloat[QoE imbalance index variation]{
    \includegraphics[width=0.48\textwidth]{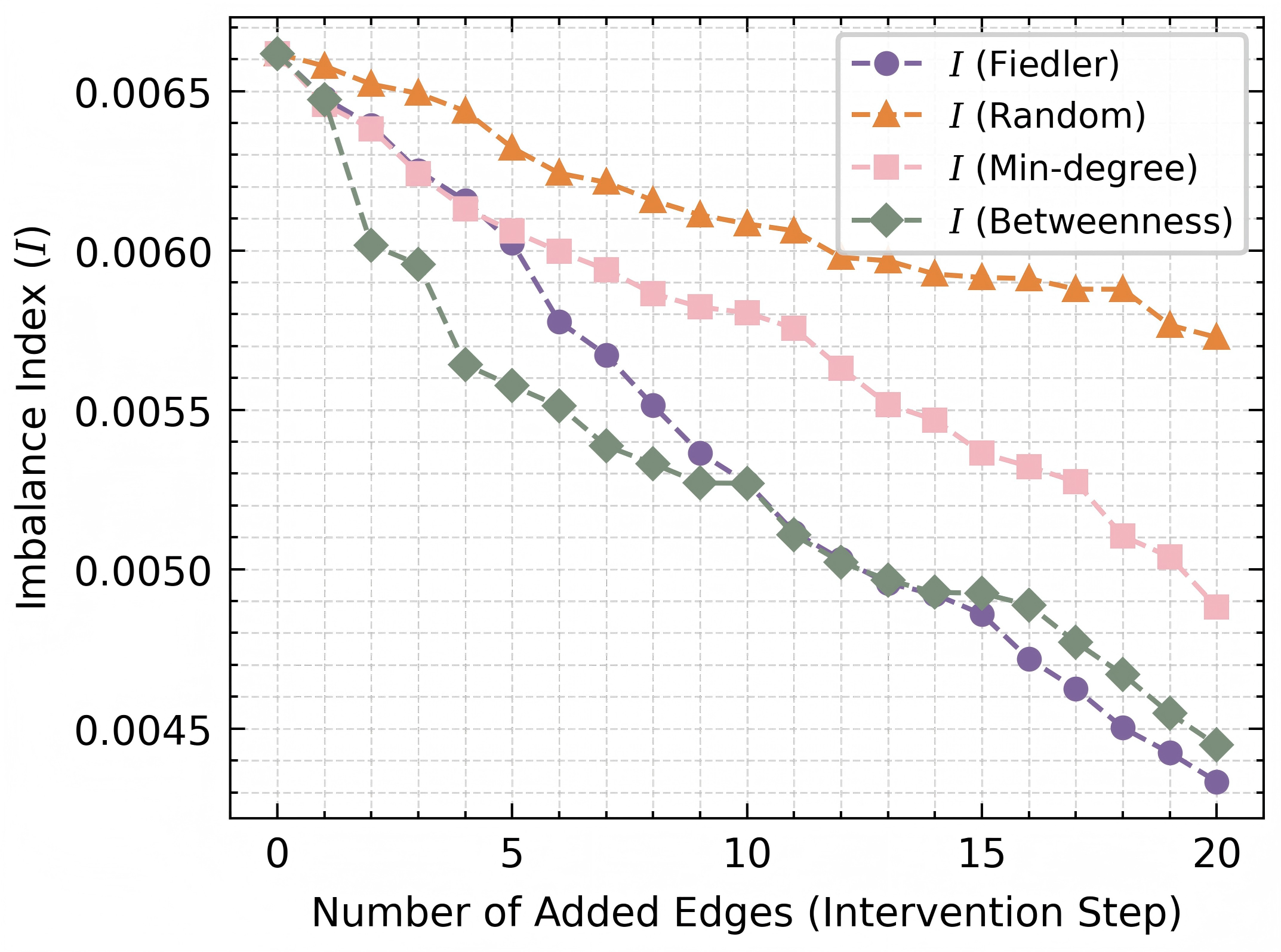}
    \label{fig:fiedler_comparison_I}
  }
  \caption{Comparison of structural intervention strategies. Subfigure (a) is the evolution of $\lambda_2$ under different structural intervention strategies. Subfigure (b) is the comparative variations of $I$ under these strategies.}
  \label{fig:fiedler_comparison}
\end{figure*}

% \begin{figure*}[!t]
% \centering
% \includegraphics[width=7in]{figure_comparison4.png}
% \caption{Comparison of structural intervention strategies. Subfigure (a) is the evolution of $\lambda_2$ under different structural intervention strategies. Subfigure (b) is the comparative variations of $I$ under these strategies.}
% \label{fig:fiedler_comparison}
% \end{figure*}

As observed in Fig. \ref{fig:fiedler_comparison}, among the four strategies, the Fiedler vector-guided structural intervention strategy yielded the most significant improvement in algebraic connectivity. 
By directly alleviating the structural bottlenecks of the topology, it strengthened global connectivity, made the path-length distribution more concentrated, and suppressed the long-tail node pairs. 
Consequently, it achieved the largest reduction in the QoE imbalance index $I$. In contrast, the betweenness centrality strategy primarily reinforced core nodes, synchronously reducing the shortest-path distances for a large number of node pairs. 
Since $I$ was highly sensitive to the overall path-length distribution, it led to a relatively rapid initial decline in $I$. However, once the paths of core nodes were sufficiently dense, the marginal benefits of this method became very limited. 
Hence, it is eventually surpassed by the Fiedler vector-guided structural intervention strategy in the later stage. The other two strategies cannot accurately address the structural bottlenecks, so the improvements in structure and the corresponding QoE imbalance index $I$ were very limited.

Overall, it can be demonstrated that the Fiedler vector-guided structural intervention strategy can efficiently and accurately mitigate severe QoE imbalance problem caused by structural bottlenecks.
\subsection{Inference Verification and Engineering Design Application}
By validating the correctness of Corollary \ref{cor:reverse_design_h0} and Corollary \ref{cor:reverse_design_lambda2}, this section provided an experimental guideline for predictable operations in network engineering design.

\begin{figure}[!t]
\centering
\includegraphics[width=3.5in]{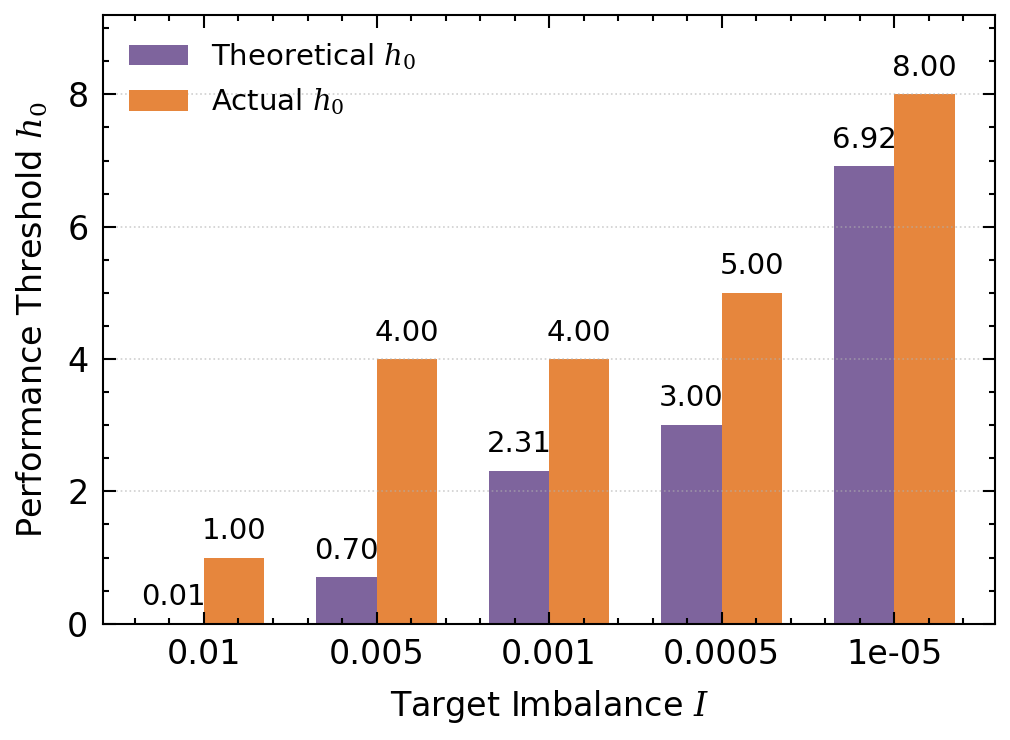}
\caption{Reverse design: threshold lower bound. This bar chart compares the actual performance threshold boundary $h_0$ with the theoretical performance threshold $h_0$ across different target $I$ values.}
\label{fig:fiedler_inference3.4}
\end{figure}

Corollary \ref{cor:reverse_design_h0} proposed a sufficient condition for the performance threshold to satisfy the target $I$. 
To verify this, we compared the relationship between the theoretical performance threshold $h_0$ and the actual boundary $h_0$ corresponding to given target $I$, as illustrated in Fig. \ref{fig:fiedler_inference3.4}.

First, the theoretical threshold $h_0$ exhibited a consistent increasing trend as the target $I$ became increasingly strict. 
This aligned perfectly with the theoretical upper bound derived in the paper. Second, according to Corollary \ref{cor:reverse_design_h0}, 
the theoretical $h_0$ serves as a guaranteed lower bound. In our simulations, for all tested target $I$, the actual $h_0$ was consistently greater than or equal to the theoretical $h_0$ boundary. 
This provides robust verification of the correctness of Corollary \ref{cor:reverse_design_h0} as a reliable performance lower bound guarantee.

Corollary \ref{cor:reverse_design_lambda2} offers theoretical guidance for predictable design directions at the level of network structural construction. 
Consequently, we conducted simulations by configuring the WS network topology to a suboptimal state. By integrating this theory with the Fiedler vector-guided structural intervention strategy, we proceeded through a workflow that evolves from ``problem diagnosis verification" to ``solution improvement." This provides a theoretical optimization example for the engineering practice of network design.

\begin{figure}[!t]
\centering
\includegraphics[width=3.5in]{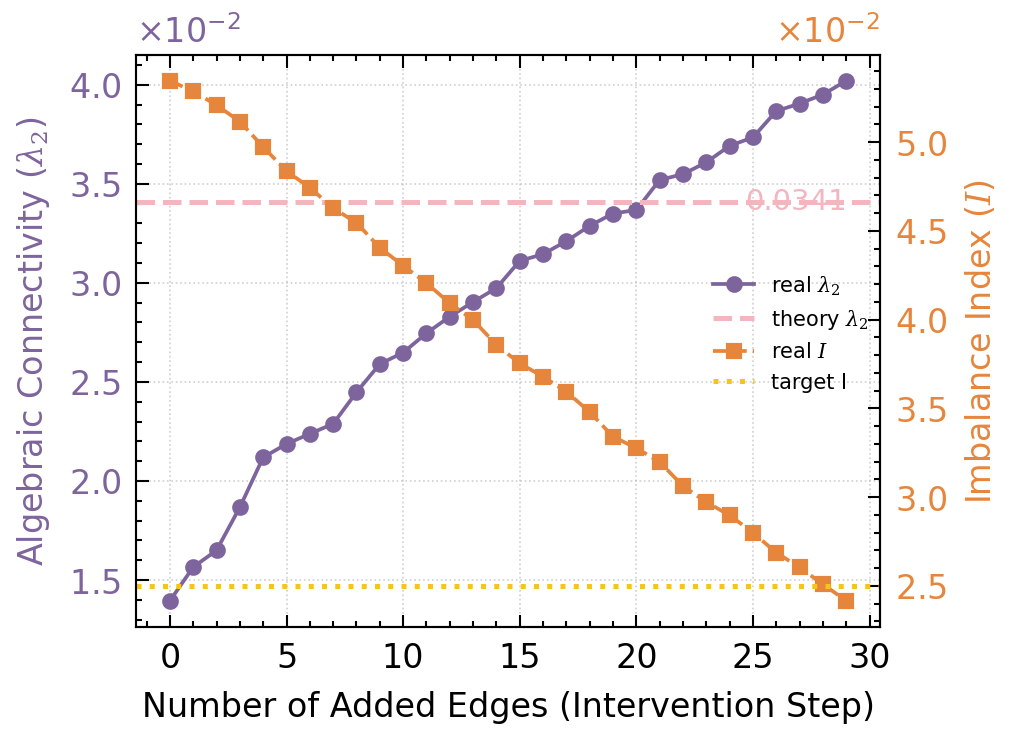}
\caption{Reverse design: spectral gap lower bound. The yellow dashed line indicates the target $I$, and the pink dashed line represents the theoretical safe connectivity required to guarantee the attainment of the target $I$.}
\label{fig:fiedler_inference3.5}
\end{figure}

As shown in Fig. \ref{fig:fiedler_inference3.5}, the simulation set a target $I$ of $0.025$. Initially, the imbalance of the network was approximately $0.043$ and the algebraic connectivity was $0.016$. By adjusting $a$ to a sufficiently large value, we ensured that the current suboptimal performance was primarily driven by the topological structure. According to Corollary \ref{cor:reverse_design_lambda2}, we calculated the theoretical connectivity threshold required to meet the target to be $0.0341$.

Consequently, we employed the Fiedler vector-guided intervention strategy to continuously improve the network's algebraic connectivity. It was observed that the algebraic connectivity exhibited an increasing trend with the addition of edges, while the network imbalance consistently decreased. This indicated that optimization of the underlying structure translated directly and effectively into improvements in upper-layer service fairness. When the network connectivity reached $0.037$, the network imbalance fell below the target $I$ for the first time. At this point, the actual required $\lambda_2$ exceeded the theoretical calculated value.

This result successfully verifies the correctness of Corollary \ref{cor:reverse_design_lambda2} as a sufficient condition providing a lower bound guarantee. By calculating the predicted connectivity threshold to assess network imbalance and performing targeted structural adjustments, this experiment clearly demonstrates that the proposed theory provides highly accurate and quantifiable engineering design goals for network optimization in ``structure-limited" states.

\section{Conclusion}
\label{sec:conclusion}
This paper moves beyond static fairness evaluation by developing a bound-based analytical framework to predict QoE fairness sensitivity under changes in SLAs. We prove a threshold-dependent exponential spectral upper bound on the QoE imbalance index $I$ (Theorem \ref{thm:spectral_bound}). 
This bound show that exponential improvement emerges only after a topology-dependent threshold ($r_*$), and that the decay rate is governed by the bottleneck term ($\min\{a, c\lambda_2\}$). From this bound, we derive a series of actionable engineering principles, including the bottleneck-driven design principle (Proposition \ref{prop:bottleneck_effect}), goal-driven reverse design methods (Corollaries \ref{cor:reverse_design_h0} and \ref{cor:reverse_design_lambda2}), and a Fiedler-vector-guided structural intervention criterion (Theorem \ref{thm:fiedler_intervention}). 

The analysis characterizes a topological ceiling under idealized assumptions, serving as a diagnostic baseline: if $\lambda_2$is insufficient even in this regime, higher-layer optimizations will be fundamentally limited. Future work will relax modeling assumptions and extend the certificate and intervention mechanisms to weighted/time-varying networks with dynamic traffic and congestion feedback.

\section*{Data Availability Statement}
We used the CAIDA AS Relationships (serial-1) dataset (2025-08 snapshot). Relevant files include ``20250801.as-rel.txt.bz2". The download link and acceptable use agreement (AUA) for the dataset are listed in the resource catalog on the official CAIDA website.

\section*{Acknowledgment}
This research adhered to CAIDA's Public Acceptable Use Agreement (Public AUA). The authors thank the CAIDA team for providing valuable public datasets.

\bibliographystyle{IEEEtran}
\bibliography{IEEEabrv, reference}

\appendices

\section{Preliminary Inequalities and Notations}
\label{app:preliminaries}

\subsection{Cheeger's Inequality and Degree Ratio Conversion}
\label{app:cheeger}
Let $G=(V,E)$ be a graph with bounded degrees: $1\le \delta\le d(v)\le \Delta$. The volume of a set $S \subset V$ is $\mathrm{vol}(S):=\sum_{v\in S} d(v)$. The volume-based Cheeger constant is $\Phi:=\min_{S\subset V,\ 0<\mathrm{vol}(S)\le \mathrm{vol}(V)/2} \frac{|\partial S|}{\mathrm{vol}(S)}$. The relationship between the spectral gap $\lambda_2$ of the normalized Laplacian and $\Phi$ is given by Cheeger's inequality:
\begin{equation}
\frac{\lambda_2}{2} \le \Phi \le \sqrt{2\lambda_2}.
\end{equation}
When degrees are bounded, $\mathrm{vol}(S)$ and $|S|$ can be mutually bounded: $\delta |S|\le \mathrm{vol}(S)\le \Delta |S|$.

\subsection{Convexity of TV and $\chi^2$}
\label{app:convexity}
For any measure space and a mixture distribution $q=\int \alpha(s)q_s ds$, the Total Variation distance $D_{TV}$ and the chi-squared divergence $\chi^2$ are convex with respect to the first argument. Therefore:
\begin{equation}
D_{TV}(q,p)\le \int \alpha(s)D_{TV}(q_s,p)ds,
\end{equation}
\begin{equation}
\chi^2(q|p)\le \int \alpha(s)\chi^2(q_s|p)ds.
\end{equation}

\subsection{Relationship between KL and $\chi^2$}
\label{app:kl_chi2}
For any probability distributions $q,p$, the Kullback-Leibler (KL) divergence and the $\chi^2$ divergence satisfy Pinsker's inequality and a first-order Taylor expansion inequality:
\begin{equation}
D_{KL}(q|p) \le \ln(1+\chi^2(q|p)) \le \chi^2(q|p).
\end{equation}

\subsection{Pointwise Bound for Logistic Weights}
\label{app:logistic_bound}
Let the weight function be $s(r)=1/(1+e^{a(r-h_0)})$. Its derivative satisfies:
\begin{equation}
\frac{ds}{dr} = -a \cdot s(r)(1-s(r)),
\end{equation}
And it has the following symmetric exponential bound:
\begin{equation}
s(r)(1-s(r))=\frac{1}{2+2\cosh(a(r-h_0))} \le e^{-a|r-h_0|}.
\end{equation}
This inequality is derived from $\cosh x \ge e^{|x|}/2$.

\subsection{Pair-Ball Counting Identity}
\label{app:pair_ball_identity}
Let $B_r(u)$ be the ball of radius $r$ centered at $u$ (in graph distance). Then the fraction of unordered pairs $\tau_r$ with distance greater than $r$ satisfies:
\begin{equation}
\tau_r = \frac{1}{M} \cdot \frac{1}{2}\sum_{u\in V} |V\setminus B_r(u)|.
\end{equation}
Reason: Each unordered pair $\{i,j\}$ with $\mathrm{dist}(i,j)>r$ is counted exactly twice in the summation on the right: once when $u=i$ (counting $j \in V\setminus B_r(i)$) and once when $u=j$ (counting $i \in V\setminus B_r(j)$).

\textbf{Remark:} Assumption A is introduced to keep the constants in our spectral bounds independent of the network size, which is standard in spectral-inequality analyses. 
When the degree ratio $\beta = \Delta/\delta$ grows with $n$ (e.g., in heavy-tailed graphs), the same derivations still hold but the bound constants become $\beta$-dependent and the resulting bound may be looser. 
In this work, we use Assumption A to establish the cleanest form of the bound.

\section{Proof of the Spectral Tail Bound Lemma}
\label{app:proof_spectral_tail}

\begin{lemma}[Spectral Tail Bound]
\label{lem:spectral_tail_bound}
This is the key lemma required for Theorem \ref{thm:spectral_bound} in the main text. There exist constants $C,c>0$ (depending only on $\Delta/\delta$) and a threshold $r_* \le C \frac{\ln n}{\lambda_2}$, such that for all $r\ge 0$:
\begin{equation}
\tau_r \le C \exp\big(-c\lambda_2 [r-r_*]_+\big),
\end{equation}
\end{lemma}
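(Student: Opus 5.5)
Since $\tau_r=\frac{1}{2M}\sum_u |V\setminus B_r(u)|$ by the pair-ball counting identity of Appendix \ref{app:pair_ball_identity}, I would reduce the lemma to a uniform bound on the complement of balls. Concretely, I aim to show that for every vertex $u$,
\[
|V\setminus B_r(u)| \le C' n\,\exp\big(-c\lambda_2 [r-r_*]_+\big).
\]
Averaging this over $u$ and using $\frac{n\cdot n}{2M}=\frac{n}{n-1}\le 2$ then gives the stated form of $\tau_r$. The statement is trivial for $r\le r_*$, since $\tau_r\le 1\le C$. So the content lies in two places: an expansion phase, in which balls reach half the volume by radius $r_*$, and a contraction phase, in which the complement shrinks geometrically afterwards.

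\textbf{Expansion phase.} Cheeger's inequality (Appendix \ref{app:cheeger}) gives $\Phi\ge\lambda_2/2$. For any set $S$ with $\mathrm{vol}(S)\le \mathrm{vol}(V)/2$, the number of boundary edges is at least $\frac{\lambda_2}{2}\mathrm{vol}(S)$. Each new vertex in $B_{r+1}(u)\setminus B_r(u)$ absorbs at most $\Delta$ of these edges and contributes at least $\delta$ to the volume. Hence
\[
\mathrm{vol}(B_{r+1})\ge \mathrm{vol}(B_r)\Big(1+\tfrac{\lambda_2}{2\beta}\Big)
\]
as long as $\mathrm{vol}(B_r)\le\mathrm{vol}(V)/2$. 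Starting from $\mathrm{vol}(B_0)\ge\delta$ and using $\mathrm{vol}(V)\le\Delta n$, half the volume is reached after at most
\[
r_0=\frac{\ln(\beta n)}{\ln(1+\lambda_2/(2\beta))}\le C\,\frac{\ln n}{\lambda_2}
\]
steps. Here I use $\lambda_2\le 2$, so $\ln(1+x)\ge x/(1+x)\gtrsim x$ with a $\beta$-dependent constant. I set $r_*:=r_0$, or $r_0$ plus an additive constant if needed.

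\textbf{Contraction phase.} Once $\mathrm{vol}(B_r)\ge\mathrm{vol}(V)/2$, let $T_r=V\setminus B_r(u)$, which has $\mathrm{vol}(T_r)\le \mathrm{vol}(V)/2$. Applying Cheeger to $T_r$, its edge boundary is at least $\frac{\lambda_2}{2}\mathrm{vol}(T_r)$. All these boundary edges land in the sphere $B_r\setminus B_{r-1}$, which is contained in $B_r$. Here I need to orient the argument correctly. The boundary vertices of $T_r$ lie inside $T_r$ at distance exactly $r+1$, and $T_{r+1}=T_r$ minus those boundary vertices. Each boundary edge of $T_r$ has its $T_r$-endpoint in $S_{r+1}=T_r\setminus T_{r+1}$, so
\[
|S_{r+1}|\ge \frac{\lambda_2\,\mathrm{vol}(T_r)}{2\Delta}.
\]
Then
\[
\mathrm{vol}(T_{r+1})\le \mathrm{vol}(T_r)-\delta|S_{r+1}|\le \Big(1-\tfrac{\lambda_2}{2\beta}\Big)\mathrm{vol}(T_r).
\]
Iterating from $r_*$ gives
\[
\mathrm{vol}(T_r)\le \tfrac{\mathrm{vol}(V)}{2}\,e^{-\frac{\lambda_2}{2\beta}(r-r_*)},
\]
and dividing by $\delta$ gives $|T_r|\le \tfrac{\beta n}{2}\,e^{-c\lambda_2(r-r_*)}$ with $c=1/(2\beta)$. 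Combining with the first step, $\tau_r\le \beta\, e^{-c\lambda_2[r-r_*]_+}$, so $C=\max\{\beta,\,C_{\rm exp}\}$ works.

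\textbf{Main obstacle.} The delicate point is the bookkeeping at the crossover and the choice of the boundary notion, edge versus vertex boundary and inside versus outside. Conversion between these costs factors of $\Delta$ and $\delta$, and that is exactly why the constants depend only on $\beta$. I also need to check that $r_*$ is uniform over the starting vertex $u$, which holds because the growth bound starts from $\mathrm{vol}(B_0)\ge \delta$ for every $u$. Finally, the expansion-phase estimate $\ln(1+x)\gtrsim x$ needs $x=\lambda_2/(2\beta)\le 1$, which holds since $\lambda_2\le 2$ and $\beta\ge1$.
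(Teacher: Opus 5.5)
Your proposal is correct and follows essentially the same route as the paper. Both arguments use Cheeger-driven geometric growth of $\mathrm{vol}(B_r(u))$ by the factor $1+\lambda_2/(2\beta)$ until half the total volume is reached within $O(\ln n/\lambda_2)$ steps, then geometric shrinkage of $\mathrm{vol}(V\setminus B_r(u))$ by $1-\lambda_2/(2\beta)$ via Cheeger on the complement, and finally averaging with the pair-ball identity. Your version differs only in presentation: it takes a uniform $r_*$ in place of $\sup_u t_*(u)$, and it spells out the contraction step, which the paper leaves implicit, by identifying the inner boundary of $V\setminus B_r(u)$ with the sphere of radius $r+1$.
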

\begin{proof}
Fix a source node $u$. Let $B_t:=B_t(u)$. When $\mathrm{vol}(B_t)\le \mathrm{vol}(V)/2$, by Cheeger's inequality (Appendix \ref{app:cheeger}), we have $|\partial B_t|\ge \Phi \mathrm{vol}(B_t)$. The volume growth satisfies:
\begin{equation}
\begin{split}
    \mathrm{vol}(B_{t+1})-\mathrm{vol}(B_t) \ge \delta |B_{t+1}\setminus B_t| \\
    \ge \frac{\delta}{\Delta} |\partial B_t| \ge \frac{\delta \Phi}{\Delta}\mathrm{vol}(B_t),
\end{split}
\end{equation}
Therefore, $\mathrm{vol}(B_{t+1})\ge (1+\kappa)\mathrm{vol}(B_t)$, where $\kappa:=\frac{\delta\Phi}{\Delta} \ge c\lambda_2$. The number of steps $t_*(u)$ required to grow from $\mathrm{vol}(B_0)\ge \delta$ to $\mathrm{vol}(V)/2$ satisfies:
\begin{equation}
t_*(u) \le \frac{\log(\mathrm{vol}(V)/(2\delta))}{\log(1+\kappa)} \le C\frac{\log n}{\lambda_2},
\end{equation}
Let $r_* = \sup_u t_*(u)$. When $t > r_*$, we have $\mathrm{vol}(B_t) > \mathrm{vol}(V)/2$. Applying Cheeger's inequality to the complement set $S_t=V\setminus B_t$, we get exponential shrinkage of the complement's volume: $\mathrm{vol}(S_{t+1})\le (1-\kappa)\mathrm{vol}(S_t)$. By recursion:
\begin{equation}
\begin{split}
    |V\setminus B_t(u)| \le \frac{1}{\delta}\mathrm{vol}(S_t) \le \frac{\mathrm{vol}(V)}{2\delta}(1-\kappa)^{[t-t_*(u)]_+}\\  
\le \frac{n\Delta}{2\delta} \exp(-c\lambda_2 [t-t_*(u)]_+),
\end{split}
\end{equation}
Averaging over all source nodes $u$ and using the identity from Appendix \ref{app:pair_ball_identity}:
\begin{equation}
\begin{split}
    \tau_r = \frac{1}{2M}\sum_{u}|V\setminus B_r(u)| \le \frac{1}{2M}\sum_u \frac{n\Delta}{2\delta} e^{-c\lambda_2[r-t_*(u)]_+}\\
\le C_0 e^{-c\lambda_2[r-r_*]_+}.
\end{split}
\end{equation}
where the constant $C_0$ absorbs factors like $\frac{n^2 \Delta}{2\delta M} \approx \frac{\Delta}{\delta}$.
\end{proof}

\section{Proofs of Lemmas related to Tomographic Decomposition}
\label{app:proof_tomographic}

\begin{lemma}[Tomographic Representation]
\label{lem:tomographic_rep}
$w(h)=\int_0^1 \mathbf{1}_{\{h \le r(s)\}} ds$, where $r(s)=h_0+\frac{1}{a}\log\frac{1-s}{s}$.
\end{lemma}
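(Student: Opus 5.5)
The identity is a pointwise, layer-cake statement. I plan to prove it by computing the integral exactly for a fixed hop count $h$. Here $w(h)$ denotes the logistic satisfaction weight $s(h)=1/(1+e^{a(h-h_0)})$ from the definition of $w_{uv}$, with $h=h(u,v)$.

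First, study the map $s\mapsto r(s)=h_0+\frac1a\log\frac{1-s}{s}$ on $(0,1)$. Since $a>0$ and $\frac{1-s}{s}$ is strictly decreasing from $+\infty$ to $0$, the map $r$ is a strictly decreasing continuous bijection from $(0,1)$ onto $\mathbb{R}$. Its inverse is obtained by solving $a(r-h_0)=\log\frac{1-s}{s}$, which gives $s=1/(1+e^{a(r-h_0)})$. In other words, the inverse of $r(\cdot)$ is exactly the logistic weight function $s(\cdot)$ of Appendix~\ref{app:logistic_bound}.

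Second, use monotonicity to identify the integration set. For fixed $h$, because $r$ is strictly decreasing, we have $h\le r(s)$ if and only if $s\le r^{-1}(h)=s(h)=w(h)$. Hence
\begin{equation*}
\mathbf{1}_{\{h\le r(s)\}}=\mathbf{1}_{\{s\le w(h)\}}
\end{equation*}
for every $s\in(0,1)$. The endpoints $s\in\{0,1\}$ form a null set, where $r$ takes the extended values $\pm\infty$. The integral over $[0,1]$ therefore equals the Lebesgue measure of $[0,w(h)]$, which is $w(h)$ because $w(h)\in(0,1)$.

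The only delicate point is bookkeeping: checking the direction of monotonicity, and noting that the boundary case $h=r(s)$ and the endpoints $s=0,1$ are measure zero, so strict versus non-strict inequalities do not matter. I do not expect a genuine obstacle.

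As a remark, I would add that the representation expresses $w(h)$ as an average over $s$ of indicators $\mathbf{1}_{\{h\le r(s)\}}$ of balls of radius $r(s)$. This is what later allows convexity (Appendix~\ref{app:convexity}) and the tail bound of Lemma~\ref{lem:spectral_tail_bound} to be applied radius by radius.
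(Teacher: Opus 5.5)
Your proof is correct and is essentially the paper's argument. The paper also uses the equivalence $w(h)>s \iff h<r(s)$ to write $w(h)=\int_0^1 \mathbf{1}_{\{s<w(h)\}}\,ds=\int_0^1\mathbf{1}_{\{h<r(s)\}}\,ds$, and then replaces $<$ by $\le$ on a null set; you instead invert $r$ explicitly and work with the non-strict equivalence $h\le r(s)\iff s\le w(h)$ from the start.
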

\begin{proof}
The function $w(h)$ is monotonically decreasing and satisfies $w(h)>s \iff h<r(s)$. Thus:
\begin{equation}
\begin{split}
    w(h)=\int_0^1 \mathbf{1}_{\{s<w(h)\}}ds = \int_0^1 \mathbf{1}_{\{h<r(s)\}}ds\\
    = \int_0^1 \mathbf{1}_{\{h\le r(s)\}}ds.
\end{split}
\end{equation}
\end{proof}

\begin{lemma}[Mixture Representation and Convexity]
\label{lem:mixture_rep}
(a) $p=\int_0^1 \alpha(s) u_{\le r(s)} ds$, where $u_{\le r}$ is the uniform distribution on the set $S_r=\{k:h_k\le r\}$, and $\alpha(s)=|S_{r(s)}|/W$ is a probability density.
(b) $D_{TV}(p,u)\le \int_0^1 \alpha(s) D_{TV}(u_{\le r(s)},u) ds$.
\end{lemma}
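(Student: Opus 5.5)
Part (a) is a Fubini-type rearrangement of the normalized weights using the tomographic representation of Lemma~\ref{lem:tomographic_rep}. Part (b) then follows from the convexity of total variation recorded in Appendix~\ref{app:convexity}.

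For (a), write $W=\sum_{k\in K} w_k$ for the normalizer, where $w_k=w(h_k)$ and $h_k$ is the hop count of pair $k$. By Lemma~\ref{lem:tomographic_rep}, for every pair $k$,
\begin{equation}
p_k=\frac{w(h_k)}{W}=\frac{1}{W}\int_0^1 \mathbf{1}_{\{h_k\le r(s)\}}\,ds=\frac{1}{W}\int_0^1 \mathbf{1}_{\{k\in S_{r(s)}\}}\,ds .
\end{equation}
Whenever $S_{r(s)}\neq\emptyset$, we have $\mathbf{1}_{\{k\in S_{r(s)}\}}=|S_{r(s)}|\,u_{\le r(s)}(k)$. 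Substituting this gives
\begin{equation}
p_k=\int_0^1 \frac{|S_{r(s)}|}{W}\,u_{\le r(s)}(k)\,ds=\int_0^1\alpha(s)\,u_{\le r(s)}(k)\,ds .
\end{equation}
Parameters $s$ with $S_{r(s)}=\emptyset$ (that is, $r(s)<1$) contribute zero on both sides, so the convention chosen for $u_{\le r}$ there is irrelevant.

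Next we check that $\alpha$ is a probability density. Clearly $\alpha\ge 0$. Swapping the finite sum with the integral (Tonelli, since everything is nonnegative) and applying Lemma~\ref{lem:tomographic_rep} once more,
\begin{equation}
\int_0^1\alpha(s)\,ds=\frac{1}{W}\sum_{k}\int_0^1\mathbf{1}_{\{h_k\le r(s)\}}\,ds=\frac{1}{W}\sum_k w_k=1 .
\end{equation}
Measurability is immediate: $r(s)$ is continuous and strictly decreasing on $(0,1)$, so $s\mapsto|S_{r(s)}|$ is a monotone step function.

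For (b), we apply the convexity of $D_{TV}(\cdot,u)$ in its first argument, as stated in Appendix~\ref{app:convexity}, to the mixture $p=\int\alpha(s)\,u_{\le r(s)}\,ds$ with $q_s=u_{\le r(s)}$. To be concrete, $D_{TV}(p,u)=\tfrac12\sum_k|p_k-u_k|$. Since $\int\alpha=1$, we can write $u_k=\int\alpha(s)\,u_k\,ds$. Then
\begin{equation}
|p_k-u_k|=\Big|\int_0^1\alpha(s)\big(u_{\le r(s)}(k)-u_k\big)\,ds\Big|\le\int_0^1\alpha(s)\,\big|u_{\le r(s)}(k)-u_k\big|\,ds .
\end{equation}
Summing over $k$ and exchanging the sum and integral gives the claim.

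The only delicate point is the bookkeeping at the endpoints. Near $s\to 0$, $r(s)\to\infty$, so $S_{r(s)}=K$ and $u_{\le r(s)}=u$. Near $s\to1$, $r(s)\to-\infty$, so $S_{r(s)}=\emptyset$. We have to make sure the empty-set region is harmless, and it is, because $\alpha=0$ there. We also need to confirm that the normalization uses $W$ rather than $M$, so that $\alpha$ integrates to exactly one.
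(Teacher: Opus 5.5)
Your proof is correct and follows the paper's route. The tomographic representation combined with Fubini/Tonelli gives $W=\int_0^1|S_{r(s)}|\,ds$ and the mixture formula for $p_k$, and part (b) is the convexity of $D_{TV}(\cdot,u)$ in its first argument, which you make explicit via the triangle inequality for integrals. Your bookkeeping for the parameters with $S_{r(s)}=\emptyset$ is a detail the paper's proof passes over silently: there $\alpha(s)=0$, so the undefined $u_{\le r(s)}$ is harmless.
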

\begin{proof}
(a) From Lemma \ref{lem:tomographic_rep} and Fubini's theorem:
$W=\sum_k w(h_k)=\sum_k\int_0^1 \mathbf{1}_{\{h_k\le r(s)\}}ds = \int_0^1 |S_{r(s)}| ds$.
Thus, $p_k=\frac{w(h_k)}{W}=\int_0^1 \frac{|S_{r(s)}|}{W}\cdot \frac{\mathbf{1}_{\{h_k\le r(s)\}}}{|S_{r(s)}|}ds = \int_0^1 \alpha(s)u_{\le r(s)}(k)ds$.
(b) This follows from the convexity of $D_{TV}$ with respect to its first argument (Appendix \ref{app:convexity}).
\end{proof}

\begin{lemma}[TV of a Truncated Model]
\label{lem:tv_truncated}
$D_{TV}(u_{\le r},u)=\tau_r$.
\end{lemma}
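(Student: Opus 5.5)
The plan is to compute the total variation distance directly from its definition. There are two distributions on the pair set $K$: the truncated uniform law $u_{\le r}$, which puts mass $1/|S_r|$ on each pair in $S_r=\{k: h_k\le r\}$ and $0$ elsewhere, and the global uniform law $u$, which puts mass $1/M$ on every pair. I will use the standard representation
\[
D_{TV}(q,p)=\sum_{k:\,q_k>p_k}(q_k-p_k),
\]
equivalently half the $\ell_1$ distance. Since $S_r\subseteq K$, we have $|S_r|\le M$, so for $k\in S_r$ we get $1/|S_r|\ge 1/M$. For $k\notin S_r$ we get $0\le 1/M$. Hence the set where $u_{\le r}$ exceeds $u$ is contained in $S_r$, and on $S_r$ the difference $1/|S_r|-1/M$ is nonnegative. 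So we may sum over all of $S_r$.

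Summing gives
\[
D_{TV}(u_{\le r},u)=|S_r|\Big(\tfrac{1}{|S_r|}-\tfrac{1}{M}\Big)=1-\tfrac{|S_r|}{M}=\tfrac{M-|S_r|}{M}.
\]
Here $M-|S_r|$ is the number of unordered pairs whose distance exceeds $r$. Dividing by $M$ gives exactly the tail fraction $\tau_r$, consistent with the counting identity in Appendix~\ref{app:pair_ball_identity}. As a cross-check, I will also verify the $\ell_1$ form: the half-sum $\tfrac12\big[(1-|S_r|/M)+(M-|S_r|)/M\big]$ gives the same value.

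The only real obstacle is a degenerate case rather than any estimate. If $S_r=\emptyset$, which happens for $r<1$ since all distances are at least $1$, then $u_{\le r}$ is undefined. I will handle this by restricting the lemma to $r$ with $S_r\neq\emptyset$. This restriction is harmless for its use in Lemma~\ref{lem:mixture_rep}(b): the weight $\alpha(s)=|S_{r(s)}|/W$ vanishes exactly on such $s$, so those terms contribute nothing to the mixture integral. For connected $G$ and $r\ge \mathrm{diam}(G)$, we have $S_r=K$ and both sides are $0$, which gives a consistency check.
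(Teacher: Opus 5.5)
Your proposal is correct and is essentially the paper's own direct computation: the paper also splits $K$ into $S_r$ and its complement $T_r$ and evaluates half the $\ell_1$ distance to get $|T_r|/M=\tau_r$, and your positive-part formula is just an equivalent form of that calculation. Your explicit handling of the degenerate case $S_r=\emptyset$, where $u_{\le r}$ is undefined but $\alpha(s)=0$, covers a point the paper's proof leaves implicit.
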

\begin{proof}
On the set $S_r$, the difference is $|u_{\le r}-u|=|1/|S_r|-1/M|$; on the complement $T_r$, the difference is $1/M$. The total variation distance is half the L1-norm:
\begin{equation}
\begin{split}
    D_{TV}(u_{\le r},u)=\frac{1}{2}\left(|S_r|\left|\frac{1}{|S_r|}-\frac{1}{M}\right|+|T_r|\cdot\frac{1}{M}\right)\\
= \frac{M-|S_r|}{M}=\frac{|T_r|}{M}=\tau_r.
\end{split}
\end{equation}
\end{proof}

\section{Proofs of Relationships between J and TV / $\chi^2$ / KL}
\label{app:proof_J_to_div}

\begin{definition}[Average Tail Probability J]
$J:=\int_0^1 \tau_{r(s)} ds = \mathbb{E}_{\mathbf{u}}[1-w(h)]$.
\end{definition}

\begin{proposition}[From J to TV]
\label{prop:J_to_TV}
$D_{TV}(p,u) \le \frac{J}{1-J}$.
\end{proposition}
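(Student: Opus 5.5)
The plan is to express everything through the normalization constant $W=\sum_{k\in K} w(h_k)$ and to read off the factor $1/(1-J)$ from it. First I will record the identity
\[
1-J=\mathbb{E}_{\mathbf u}[w(h)]=\frac{W}{M},
\]
which follows directly from the definition $J=\mathbb{E}_{\mathbf u}[1-w(h)]$. It is also consistent with the tomographic form: by Lemma \ref{lem:tomographic_rep} and Fubini, $\int_0^1 (1-\tau_{r(s)})\,ds=\int_0^1 |S_{r(s)}|/M\,ds = W/M$. Thus the claim is equivalent to $D_{TV}(p,u)\le MJ/W$.

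The main route uses the mixture machinery already established. By Lemma \ref{lem:mixture_rep}(b) and Lemma \ref{lem:tv_truncated},
\[
D_{TV}(p,u)\le \int_0^1 \alpha(s)\,\tau_{r(s)}\,ds .
\]
Substituting $\alpha(s)=|S_{r(s)}|/W = M(1-\tau_{r(s)})/W$ gives
\[
D_{TV}(p,u)\le \frac{M}{W}\int_0^1 (1-\tau_{r(s)})\,\tau_{r(s)}\,ds\le \frac{M}{W}\int_0^1 \tau_{r(s)}\,ds=\frac{J}{1-J}.
\]
The last inequality only drops the factor $1-\tau\le 1$. The final equality uses $W/M=1-J$ from the first step.

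As a sanity check, I would also verify the bound directly, which avoids the tomographic lemmas. Set $\bar w = W/M$. Then
\[
D_{TV}(p,u)=\frac{1}{2W}\sum_k |w_k-\bar w|.
\]
By the triangle inequality, $|w_k-\bar w|\le (1-w_k)+(1-\bar w)$, since $0\le w_k,\bar w\le 1$. Summing over $k$ bounds $\sum_k|w_k-\bar w|$ by $MJ+MJ=2MJ$, which again yields $J/(1-J)$.

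I expect no serious obstacle. The only delicate point is bookkeeping: $\alpha(s)$ must be a genuine probability density, which holds because $\int_0^1\alpha(s)\,ds=W/W=1$. The open-versus-closed threshold $h\le r(s)$ is immaterial, since it concerns a measure-zero set of $s$ in Lemma \ref{lem:tomographic_rep}. I will also note that $J<1$ whenever $W>0$, which always holds because the logistic weights are strictly positive, so the right-hand side is finite.
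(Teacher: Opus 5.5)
Your main route is the paper's proof: the convexity bound $D_{TV}(p,u)\le\int_0^1\alpha(s)\tau_{r(s)}\,ds$, the substitution $\alpha(s)=(1-\tau_{r(s)})/(1-J)$ via $W/M=1-J$, and then dropping the factor $1-\tau_{r(s)}\le 1$ (the paper phrases this last step as discarding $\int\tau_{r(s)}^2\,ds$). Your direct check via $|w_k-\bar w|\le(1-w_k)+(1-\bar w)$, giving $\sum_k|w_k-\bar w|\le 2MJ$, is also correct and reaches the same bound more elementarily, without the tomographic lemmas.
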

\begin{proof}
From the lemmas in Appendix \ref{app:proof_tomographic}, $D_{TV}(p,u)\le \int \alpha(s)\tau_{r(s)}ds$.
Also, $\frac{W}{M}=\int_0^1 \frac{|S_{r(s)}|}{M}ds=\int_0^1 (1-\tau_{r(s)})ds=1-J$.
Substituting $\alpha(s) = \frac{M(1-\tau_{r(s)})}{W} = \frac{1-\tau_{r(s)}}{1-J}$:
\begin{equation}
\begin{split}
    D_{TV}(p,u) \le \frac{1}{1-J}\int (1-\tau_{r(s)})\tau_{r(s)} ds\\
= \frac{J-\int \tau_{r(s)}^2 ds}{1-J}\le \frac{J}{1-J}.
\end{split}
\end{equation}
\end{proof}

\begin{proposition}[From J to $\chi^2$ and KL]
\label{prop:J_to_KL}
$D_{KL}(p|u) \le \chi^2(p|u) \le \frac{J}{1-J}$.
\end{proposition}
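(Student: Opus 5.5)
The plan is to prove the two inequalities separately. The first, $D_{KL}(p\|u)\le\chi^2(p\|u)$, follows immediately from Appendix~\ref{app:kl_chi2}. For the second, $\chi^2(p\|u)\le J/(1-J)$, I will compute $\chi^2$ explicitly in terms of the weights, rather than going through the mixture decomposition. The key input is the normalization already found in the proof of Proposition~\ref{prop:J_to_TV}: $W/M=1-J$, equivalently $\mathbb{E}_{\mathbf u}[w(h)]=1-J$, which is just the definition of $J$.

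First, for the uniform reference $u_k=1/M$, write
\begin{equation}
\chi^2(p\|u)=\sum_k \frac{(p_k-1/M)^2}{1/M}=M\sum_k p_k^2-1 .
\end{equation}
Substituting $p_k=w(h_k)/W$ and $W=M\,\mathbb{E}_{\mathbf u}[w]$ gives
\begin{equation}
\chi^2(p\|u)=\frac{\mathbb{E}_{\mathbf u}[w^2]}{\big(\mathbb{E}_{\mathbf u}[w]\big)^2}-1 .
\end{equation}
Second, use that the logistic weights satisfy $0<w\le 1$, so $w^2\le w$ pointwise and hence $\mathbb{E}_{\mathbf u}[w^2]\le \mathbb{E}_{\mathbf u}[w]=1-J$. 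Therefore
\begin{equation}
\chi^2(p\|u)\le \frac{1}{1-J}-1=\frac{J}{1-J}.
\end{equation}
Note that $J<1$ holds automatically, since every $w(h_k)>0$.

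As a cross-check, consistent with the tomographic machinery, one can also run the argument through Lemma~\ref{lem:mixture_rep}. For a truncated uniform distribution, $\chi^2(u_{\le r}\|u)=M/|S_r|-1=\tau_r/(1-\tau_r)$. Convexity of $\chi^2$ (Appendix~\ref{app:convexity}) together with $\alpha(s)=(1-\tau_{r(s)})/(1-J)$ then gives
\begin{equation}
\chi^2(p\|u)\le\int_0^1 \frac{1-\tau_{r(s)}}{1-J}\cdot\frac{\tau_{r(s)}}{1-\tau_{r(s)}}\,ds=\frac{J}{1-J},
\end{equation}
which is the same bound.

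There is no real obstacle. The only points needing care are the identification $W/M=1-J$, which is already established, and the observation that $w\le 1$ is exactly what makes the second moment controllable. If anything, the step to double-check is that the reference measure in $\chi^2$ is $u$, so that the formula $M\sum_k p_k^2-1$ is the right one.
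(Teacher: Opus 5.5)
Your proof is correct. Your main argument takes a different and more elementary route than the paper; your ``cross-check'' is essentially the paper's own proof. The paper writes $p$ as the tomographic mixture $\int_0^1\alpha(s)\,u_{\le r(s)}\,ds$ of Lemma~\ref{lem:mixture_rep} and uses convexity of $\chi^2(\cdot\|u)$. It evaluates $\chi^2(u_{\le r}\|u)=\tau_r/(1-\tau_r)$ level by level and integrates against $\alpha(s)=(1-\tau_{r(s)})/(1-J)$. You instead compute $\chi^2(p\|u)=\mathbb{E}_{\mathbf u}[w^2]/(\mathbb{E}_{\mathbf u}[w])^2-1$ directly, using only $w^2\le w$ and $\mathbb{E}_{\mathbf u}[w]=1-J$. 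Both arguments, as written, arrive at exactly $M/W-1=J/(1-J)$.

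Your version has several advantages. It uses nothing about $w$ beyond $0\le w\le 1$ and $W>0$, and needs neither the layer-cake representation nor convexity. It also never touches the levels $s$ with $S_{r(s)}=\emptyset$, where $u_{\le r(s)}$ is undefined and the paper's cancellation $(1-\tau)\cdot\tau/(1-\tau)=\tau$ is formally $0\cdot\infty$. That is harmless in the paper, since those levels carry $\alpha(s)=0$ and drop out of the mixture; the cancellation merely over-counts them. Finally, your computation gives the exact value of $\chi^2$, so the slack in the bound, $\mathbb{E}_{\mathbf u}[w(1-w)]/(1-J)^2$, is explicit.

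What the paper's route buys is structural uniformity: its $\chi^2$ bound runs through the same mixture and the same per-level tail $\tau_{r(s)}$ as the TV bound of Proposition~\ref{prop:J_to_TV}. The step $D_{KL}\le\chi^2$ you handle exactly as the paper does, via Appendix~\ref{app:kl_chi2}.
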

\begin{proof}
By the convexity of $\chi^2(\cdot|u)$ (Appendix \ref{app:convexity}), $ \chi^2(p|u)\le \int \alpha(s)\chi^2(u_{\le r(s)}|u)ds$.
Direct calculation yields $\chi^2(u_{\le r}|u)=\sum_k \frac{u_{\le r}(k)^2}{u(k)}-1 = \sum_{k\in S_r} \frac{(1/|S_r|)^2}{1/M}-1=\frac{M}{|S_r|}-1 = \frac{\tau_r}{1-\tau_r}$.
Substituting $\alpha(s)=\frac{1-\tau_{r(s)}}{1-J}$, we get $\chi^2(p|u)\le \frac{1}{1-J}\int (1-\tau_{r(s)})\frac{\tau_{r(s)}}{1-\tau_{r(s)}} ds = \frac{J}{1-J}$.
The relationship $D_{KL}\le \chi^2$ follows from Appendix \ref{app:kl_chi2}.
\end{proof}

\section{Proof of the Exponential Upper Bound on J}
\label{app:proof_J_bound}

\begin{theorem}[Exponential Upper Bound on J]
\label{thm:J_bound}
Let $H:=h_0-r_*\ge 0$. There exist constants $C,c>0$ (depending only on $\Delta/\delta$) such that:
\begin{equation}
J \le C\left( e^{-aH} + aH e^{-\min(a,c\lambda_2)H} + \frac{a}{a+c\lambda_2}e^{-c\lambda_2 H}\right).
\end{equation}
\end{theorem}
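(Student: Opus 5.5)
We will bound $J=\int_0^1 \tau_{r(s)}\,ds$ by changing variables from the tomographic level $s$ to the radius $r=r(s)=h_0+\frac1a\log\frac{1-s}{s}$ (Lemma~\ref{lem:tomographic_rep}), and then inserting the Spectral Tail Bound (Lemma~\ref{lem:spectral_tail_bound}).

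\textbf{Change of variables.} Inverting gives $s=1/(1+e^{a(r-h_0)})$, and by Appendix~\ref{app:logistic_bound}, $|ds/dr|=a\,s(1-s)\le a\,e^{-a|r-h_0|}$. As $s$ runs over $(0,1)$, $r$ runs over $\mathbb{R}$, and $\tau_r\le 1$ everywhere, with $\tau_r=1$ for $r<1$. Hence
\begin{equation}
J=\int_{-\infty}^{\infty}\tau_r\, a\,s(r)(1-s(r))\,dr \le \int_{-\infty}^{\infty} \min\{1,\tau_r\}\, a\,e^{-a|r-h_0|}\,dr .
\end{equation}

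\textbf{Splitting the integral.} We split the line into three regions.

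On $r\le r_*$ we use only $\tau_r\le 1$. The integral of $a e^{-a(h_0-r)}$ over $(-\infty,r_*]$ equals $e^{-aH}$, which gives the first term.

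On $r_*<r\le h_0$ we use Lemma~\ref{lem:spectral_tail_bound}, writing $\mu:=c\lambda_2$, which gives
\begin{equation}
\int_{r_*}^{h_0} C e^{-\mu(r-r_*)}\,a\,e^{-a(h_0-r)}\,dr .
\end{equation}
Put $x=r-r_*\in[0,H]$. The integrand becomes $Ca\,e^{-aH}e^{(a-\mu)x}$. We bound the exponent $-aH+(a-\mu)x$ pointwise by $-\min(a,\mu)H$:
\begin{itemize}
\item if $a\le\mu$, then $(a-\mu)x\le 0$, so the exponent is at most $-aH$;
\item if $a>\mu$, the exponent is maximized at $x=H$, where it equals $-\mu H$.
\end{itemize}
Integrating the constant bound over an interval of length $H$ gives $C\,aH\,e^{-\min(a,\mu)H}$, the middle term. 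We deliberately avoid evaluating the integral exactly, since the exact form $\frac{a}{a-\mu}\bigl(e^{-\mu H}-e^{-aH}\bigr)$ is singular-looking at $a=\mu$.

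On $r>h_0$ we again use Lemma~\ref{lem:spectral_tail_bound}:
\begin{equation}
\int_{h_0}^\infty Ce^{-\mu(r-r_*)}\,a\,e^{-a(r-h_0)}\,dr=C e^{-\mu H}\,\frac{a}{a+\mu}.
\end{equation}
This is exactly the third term.

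\textbf{Finishing.} Adding the three pieces and enlarging $C$ so that it dominates both $1$ and the constant from Lemma~\ref{lem:spectral_tail_bound} yields the claimed bound, with constants depending only on $\Delta/\delta$.

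\textbf{Main obstacle.} The delicate step is justifying the continuous change of variables when $\tau_r$ is a step function of integer radii and $r(s)$ may be negative. This is harmless because $\tau_r$ is bounded and measurable, so the substitution is just the pushforward of Lebesgue measure on $(0,1)$. One should also check that the tail lemma's constant $C$ and the threshold $r_*$ are the same objects used in Theorem~\ref{thm:spectral_bound}; the lemma is stated for $r\ge0$, and negative $r$ fall in the first region anyway.
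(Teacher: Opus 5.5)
Your proposal is correct and follows essentially the same route as the paper. Both rewrite $J$ as $\int_{-\infty}^{\infty}\tau_r\,a\,s(r)(1-s(r))\,dr$, insert the logistic bound and the spectral tail lemma, and split at $r_*$ and $h_0$ into the same three pieces; your only change, bounding the middle-interval exponent $-aH+(a-\mu)x$ pointwise by $-\min(a,\mu)H$ instead of computing the integral exactly and then asserting the unified bound $C\,aH\,e^{-\min(a,c\lambda_2)H}$, is a slightly cleaner justification of the same step.
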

\begin{proof}
By definition and Appendix \ref{app:logistic_bound}, $J=\int_{-\infty}^{+\infty} \tau_r a s(r)(1-s(r))dr$. Substituting the spectral tail bound from Lemma \ref{lem:spectral_tail_bound} (Appendix \ref{app:proof_spectral_tail}) and the pointwise bound from Appendix \ref{app:logistic_bound}:
\begin{equation}
J \le C\int_{-\infty}^{+\infty} e^{-c\lambda_2[r-r_*]_+} a e^{-a|r-h_0|} dr,
\end{equation}
We split the integral into three parts: A: $(-\infty, r_*]$, B: $[r_*, h_0]$, and C: $[h_0, \infty)$.
\textbf{Part A ($r \le r_*$):} $[r-r_*]_+=0$, $\tau_r \le C$.
\begin{equation}
I_A \le C \int_{-\infty}^{r_*} a e^{-a(h_0-r)}dr = C e^{-a(h_0-r_*)} = C e^{-aH},
\end{equation}
\textbf{Part B ($r_* \le r \le h_0$):} Let $t=r-r_* \in [0, H]$.
\begin{equation}
\begin{split}
    I_B \le C a \int_{r_*}^{h_0} e^{-c\lambda_2(r-r_*)} e^{-a(h_0-r)}dr\\
= C a e^{-aH}\int_0^H e^{-(c\lambda_2-a)t}dt,
\end{split}
\end{equation}
The result of this integral is $C a e^{-aH}\frac{1-e^{-(c\lambda_2-a)H}}{c\lambda_2-a}$ (when $c\lambda_2 \neq a$), or $C a H e^{-aH}$ (when $c\lambda_2 = a$). A unified upper bound for both cases is $C aH e^{-\min(a,c\lambda_2)H}$.
\textbf{Part C ($r \ge h_0$):} Let $t=r-h_0 \ge 0$.
\begin{equation}
\begin{split}
    I_C \le C a \int_{h_0}^{\infty} e^{-c\lambda_2(r-r_*)} e^{-a(r-h_0)}dr\\
= C a e^{-c\lambda_2H}\int_0^\infty e^{-(a+c\lambda_2)t}dt = C \frac{a}{a+c\lambda_2} e^{-c\lambda_2H}
\end{split}
\end{equation}
Summing the three parts yields the theorem's conclusion.
\end{proof}

\section{Proof of Theorem 1 (Exponential Upper Bound on I)}
\label{app:proof_theorem1}
\begin{proof}[Proof of Theorem \ref{thm:spectral_bound}]
From Proposition \ref{prop:J_to_KL} in Appendix \ref{app:proof_J_to_div}, we have $D_{KL}(p|u) \le \frac{J}{1-J}$.
From Theorem \ref{thm:J_bound} in Appendix \ref{app:proof_J_bound}, we have $J \le C'(1+aH)e^{-\min(a,c\lambda_2)H}$.
When $H$ is large enough such that $J \le 1/2$, we have $\frac{J}{1-J}\le 2J$. In general, the constant 2 can be absorbed into $C$. Thus:
\begin{equation}
D_{KL}(p|u) \le C(1+aH)e^{-\min(a,c\lambda_2)H}.
\end{equation}
Dividing both sides by $\ln M$ yields the theorem.
\end{proof}

\section{Proof of Theorem 2 (Structural Intervention)}
\label{app:proof_theorem2}
\begin{proof}[Proof of Theorem \ref{thm:fiedler_intervention}]
Let $\mathcal{L}(\varepsilon)=I-D(\varepsilon)^{-1/2}W(\varepsilon)D(\varepsilon)^{-1/2}$. We compute the derivative of $\lambda_2(\varepsilon)$ at $\varepsilon=0$. By the Hellmann–Feynman theorem, $\frac{d\lambda_2}{d\varepsilon}|_0 = v^T \dot{\mathcal{L}} v$.
Let $\mathcal{S}=D^{-1/2}WD^{-1/2}$, so $\dot{\mathcal{L}} = -\dot{\mathcal{S}}$.
\begin{equation}
    \dot{\mathcal{S}} = (\dot{D}^{-1/2})WD^{-1/2} + D^{-1/2}\dot{W}D^{-1/2} + D^{-1/2}W(\dot{D}^{-1/2})
\end{equation}
Here $\dot{W}=E_{ij}$ (edge matrix), $\dot{D}=E_{ii}+E_{jj}$ (degree matrix), and $\frac{d}{d\varepsilon}D^{-1/2}|_0 = -\frac{1}{2}D^{-3/2}\dot{D}$.
Substituting and computing $v^T \dot{\mathcal{L}} v = -v^T \dot{\mathcal{S}} v$. Let $x_k = v_k/\sqrt{d_k}$.
The middle term contributes $-v^T D^{-1/2} E_{ij} D^{-1/2} v = -(x_i-x_j)^2 + (x_i^2+x_j^2)$. The degree perturbation terms on both sides contribute $\sum_{u \in \{i,j\}} x_u \frac{(Wx)_u}{d_u}$.
Combining all terms, after a series of inequality relaxations, it can be shown that there exists a constant $C_{\deg}$ depending only on the degree ratio such that:
\begin{equation}
\frac{d\lambda_2}{d\varepsilon}|_0 \ge \left(\frac{v_i}{\sqrt{d_i}}-\frac{v_j}{\sqrt{d_j}}\right)^2 - C_{\deg}\left(\frac{v_i^2}{d_i}+\frac{v_j^2}{d_j}\right).
\end{equation}
To ensure the derivative is positive, one should maximize the first term. Therefore, choosing the node pair that maximizes the normalized Fiedler distance is the first-order optimal strategy to guarantee an increase in $\lambda_2$.
\end{proof}

\section{Proof of Proposition \ref{prop:data_driven_cert} (Data-Driven Certificate)}

\label{app:proof_prop3}
\begin{proof}[Proof of Proposition \ref{prop:data_driven_cert}]
We start from $J=\int_{-\infty}^{+\infty} \tau_\rho a s(\rho)(1-s(\rho))d\rho$. We split the integral at point $r$.
\textbf{Interval $(\boldsymbol{-\infty, r]}$:} Here, $\tau_\rho \le 1$.
\begin{equation}
\int_{-\infty}^{r} \tau_\rho a s(1-s) d\rho \le \int_{-\infty}^{r} a e^{-a(h_0-\rho)} d\rho = e^{-a(h_0-r)},
\end{equation}
\textbf{Interval $[\boldsymbol{r, \infty)}$:} Here, $\tau_\rho \le \tau_r$ (because $\tau$ is non-increasing).
\begin{equation}
\int_{r}^{+\infty} \tau_\rho a s(1-s) d\rho \le \tau_r \int_{r}^{+\infty} a s(1-s) d\rho \le \tau_r.
\end{equation}
Adding the two parts gives $J \le \tau_r + e^{-a(h_0-r)}$. The result follows from $I \le C \cdot J/\ln M$.
\end{proof}

\section{Proof of Proposition \ref{prop:sampling_consistency} (Sampling Consistency)}
\label{app:proof_prop4}
\begin{proposition}[Sampling Consistency]
\label{prop:sampling_consistency}
$|\widehat I-I|=O_{\mathbb P}\Big(\sqrt{\frac{\log M}{m}}\Big)$, where $\widehat{I}$ is the imbalance index estimated from $m$ samples.
\end{proposition}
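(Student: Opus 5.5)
We prove the rate $|\widehat I-I|=O_{\mathbb P}(\sqrt{\log M/m})$ by fixing the sampling model: $m$ unordered pairs $k_1,\dots,k_m$ drawn i.i.d. uniformly from $K$, with $h(k_\ell)$ computed by BFS. The estimator is the plug-in one. Since $u$ is uniform, we rewrite
\begin{equation}
D_{KL}(\mathbf p\|\mathbf u)=\frac{\mathbb E_{\mathbf u}[w\ln w]}{\mathbb E_{\mathbf u}[w]}-\ln \mathbb E_{\mathbf u}[w],
\end{equation}
where $w=w(h(k))$ and $k\sim\mathbf u$. We set $\widehat I$ to be this expression with both expectations replaced by empirical means $\widehat A=\frac1m\sum_\ell w_\ell$ and $\widehat B=\frac1m\sum_\ell w_\ell\ln w_\ell$, divided by $\ln M$. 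Thus $I=g(A,B)/\ln M$ and $\widehat I=g(\widehat A,\widehat B)/\ln M$, with $g(A,B)=B/A-\ln A$.

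\textbf{Step 1 (concentration).} Both $w$ and $w\ln w$ are bounded, since $w\in(0,1)$ and $|w\ln w|\le 1/e$. Hoeffding's inequality therefore gives, with probability at least $1-\eta$,
\begin{equation}
|\widehat A-A|,\ |\widehat B-B|\le\sqrt{\ln(4/\eta)/(2m)}.
\end{equation}
This is already of order $m^{-1/2}$. A union-bound version over all $M$ pairs, or with confidence $\eta=1/M$, produces exactly the stated factor $\sqrt{\log M/m}$. I would adopt this choice so the statement matches literally; the rate is then conservative.

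\textbf{Step 2 (Lipschitz transfer).} On the set where $\widehat A\ge A/2$, the map $g$ has gradient bounded by $O(1/A)$. This uses $|B|\le 1/e$: we have $|\partial_A g|\le |B|/A^2+1/A$ and $|\partial_B g|=1/A$. The mean value theorem then gives
\begin{equation}
|\widehat I-I|\le \frac{C}{A\ln M}\bigl(|\widehat A-A|+|\widehat B-B|\bigr).
\end{equation}
The event $\widehat A\ge A/2$ holds with probability tending to one once $m\gg 1/A^2$, again by Hoeffding.

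\textbf{Main obstacle.} The difficulty is controlling $A=W/M=1-J$ from below, uniformly enough that $1/A$ is a constant. By Proposition~\ref{prop:J_to_KL} and Theorem~\ref{thm:J_bound}, in the regime of Theorem~\ref{thm:spectral_bound} with $J\le 1/2$, we get $A\ge 1/2$ and the constant is absolute. Outside that regime, I would use the crude bound $w\ge 1/(1+e^{a(n-1-h_0)})$. This keeps $A$ bounded away from zero for fixed $(a,h_0)$ and bounded diameter, but it is not uniform in general. In that case I would state the $O_{\mathbb P}$ constant as depending on $A^{-1}$, which is harmless for fixed parameters, as an $O_{\mathbb P}$ statement requires.

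Combining Steps 1 and 2, and dividing by $\ln M\ge\ln 1$ (assuming $n\ge3$), gives the result. The $1/\ln M$ normalization only helps, so the rate $\sqrt{\log M/m}$ follows a fortiori.
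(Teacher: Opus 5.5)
Your argument is correct, and it takes a genuinely different route from the paper's.

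\textbf{The paper's route.} It works with the full $M$-dimensional distribution. It sets $\widehat I=D_{KL}(\widehat p|u)/\ln M$ for a Dirichlet-smoothed empirical estimate $\widehat p$ of $p$. It obtains $\|\widetilde p-p\|_\infty\le C\sqrt{\log M/m}$ from Bernstein's inequality plus a union bound over the $M$ coordinates, which is where its $\sqrt{\log M}$ comes from. It then relies on smoothing to keep $\min_k\widehat p_k$ away from zero, so that $D_{KL}(\cdot|u)$ is locally Lipschitz, and finishes by converting norms.

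\textbf{Your route.} You use the uniform reference measure to reduce the divergence to $g(A,B)=B/A-\ln A$, where $A=\mathbb E_{\mathbf u}[w]=1-J$ and $B=\mathbb E_{\mathbf u}[w\ln w]$. Only two bounded averages under uniform pair sampling then need to concentrate.

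\textbf{What your route buys.} No smoothing and no per-coordinate lower bound are needed. The constants depend only on $A$, not on $M$, and they are absolute when $J\le 1/2$. You actually get the sharper rate $O_{\mathbb P}(m^{-1/2}/\ln M)$. Your estimator is also the one the experiments describe: uniformly sampled node pairs with BFS hop counts.

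\textbf{The paper's route by comparison.} As written, its conversion $\|\cdot\|_1\le M\|\cdot\|_\infty$ introduces a factor $M$ that is never accounted for. Its condition $\alpha M/m\to 0$ presupposes $m\gg M$ for fixed $\alpha$. So it delivers the stated rate only with $M$-dependent constants, i.e.\ for a fixed graph as $m\to\infty$. What it does aim at is a generic plug-in estimator of $p$, whereas you certify one specific ratio-of-means estimator. Since the statement does not fix the estimator, your choice is legitimate.

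\textbf{Two minor slips.} First, since $|B|\le 1/e$, the term $|B|/A^2$ makes your Lipschitz constant of order $A^{-2}$, not $A^{-1}$. This is harmless: it is still an absolute constant when $A\ge 1/2$ and a fixed constant otherwise. Second, the final division should use $\ln M\ge\ln 3>1$ rather than $\ln M\ge\ln 1$.
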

\begin{proof}
Let $\widetilde p$ be the empirical frequency without smoothing. By concentration inequalities for multinomial distributions (e.g., Bernstein's inequality) and applying a union bound over all $M$ components, we know that with high probability $1-M^{-c}$:
\begin{equation}
\|\widetilde p-p\|_{\infty} \le C\sqrt{\frac{\log M}{m}},
\end{equation}
Smoothing $\widetilde p$ with a Dirichlet-$\alpha$ prior gives $\widehat p$. When $\alpha M/m \to 0$, $\widehat p$ maintains the same order of convergence bound in the infinity norm.
The imbalance index $I$ is a function of the distribution $p$. When the minimum component of $p$ is positively lower-bounded (guaranteed by smoothing), $I(p)$ is Lipschitz continuous in a neighborhood of $p$. More precisely, the KL divergence can be expanded via a second-order Taylor series:
\begin{equation}
D_{KL}(\widehat p|u)-D_{KL}(p|u) = \langle \nabla D_{KL}(p|u), \widehat p-p \rangle + O(\|\widehat p-p\|_2^2).
\end{equation}
Using the norm relationships $\|\cdot\|_1 \le \sqrt{M}\|\cdot\|_2 \le M\|\cdot\|_{\infty}$, the infinity norm bound can be converted to bounds on the L1 and L2 norms. Ultimately, we find that $|D_{KL}(\widehat p|u)-D_{KL}(p|u)|$ is bounded by $O(\sqrt{\frac{\log M}{m}})$ with high probability. Dividing both sides by $\ln M$ yields the conclusion.
\end{proof}

\end{document}